%% file: main.tex
\documentclass[11pt]{article}

\usepackage{lmodern}

\usepackage{microtype}

\usepackage{amsmath,amssymb,amsthm,mathtools}
\usepackage{booktabs,longtable,tabularx,array}
\usepackage{enumitem}

\usepackage{needspace}

\newcommand{\dom}{\operatorname{dom}}
\newcommand{\DomCrit}{\textnormal{\textsc{DomCrit}}}
\newcommand{\EMinUncol}{\textnormal{\textsc{Edge-Min-3-Uncol}}}
\newcommand{\NP}{\mathrm{NP}}
\newcommand{\coNP}{\mathrm{coNP}}
\newcommand{\DP}{\mathrm{DP}}
\newcommand{\Pclass}{\mathrm{P}}

\newcommand{\ThetaTwoP}{\Theta_2^p}

\newtheorem{theorem}{Theorem}[section]
\newtheorem{lemma}[theorem]{Lemma}
\newtheorem{proposition}[theorem]{Proposition}
\newtheorem{corollary}[theorem]{Corollary}
\theoremstyle{definition}
\newtheorem{definition}[theorem]{Definition}
\newtheorem{remark}[theorem]{Remark}

\usepackage{hyperref}

\usepackage[nameinlink,noabbrev]{cleveref}
\crefname{theorem}{Theorem}{Theorems}
\Crefname{theorem}{Theorem}{Theorems}
\crefname{lemma}{Lemma}{Lemmas}
\Crefname{lemma}{Lemma}{Lemmas}
\crefname{corollary}{Corollary}{Corollaries}
\Crefname{corollary}{Corollary}{Corollaries}
\crefname{section}{Section}{Sections}
\Crefname{section}{Section}{Sections}
\crefname{remark}{Remark}{Remarks}
\Crefname{remark}{Remark}{Remarks}

\hypersetup{hidelinks}

\title{The Complexity of Domatic Criticality}

\author{
Holger Spakowski\\
Department of Mathematics and Applied Mathematics\\
University of Cape Town\\
Rondebosch 7701, South Africa\\
\texttt{Holger.Spakowski@uct.ac.za}
}

\date{}

\begin{document}
\maketitle

\begin{abstract}
The domatic number $\dom(G)$ of a graph $G$ is the maximum number of
dominating sets in a partition of its vertex set.  A graph is
\emph{domatically critical} if deleting any edge lowers its domatic number.
We determine the complexity of recognizing domatically critical graphs both
when the domatic number is prescribed and when it is unrestricted.  The
problems $\DomCrit_1$ and $\DomCrit_2$ are polynomial-time decidable; in
particular, $\DomCrit_2$ consists precisely of the nonempty disjoint unions
of nontrivial stars.  In contrast, for every fixed integer $k\geq3$, the
problem $\DomCrit_k$ is $\DP$-complete under polynomial-time many-one
reductions.  The hardness proof at target value three uses a switch
construction that reduces from edge-minimal $3$-uncolorability and controls
the effect of deleting every edge of the constructed graph.  Clique addition
then lifts the target-three result to every larger fixed target value.
For the unrestricted recognition problem, we prove $\DP$-hardness and
membership in $\ThetaTwoP$.
\end{abstract}

\begin{quote}
\small
\noindent\textbf{Keywords:}
Domatic number; domatic criticality; edge deletion;
graph recognition; $\DP$-completeness; Boolean hierarchy
\end{quote}

\input{introduction}
\input{preliminaries}
\input{edge-criticality}
\input{conclusion}

\bibliographystyle{alpha}
\bibliography{references}

\end{document}

%% file: introduction.tex
\section{Introduction}
\label{sec:introduction}

A dominating set of a graph reaches every vertex, either by containing that
vertex or by containing one of its neighbors.  The \emph{domatic number}
asks how many pairwise disjoint dominating sets can be packed into the same
graph.  Equivalently, it is the largest number of groups into which the
vertices can be partitioned so that every vertex has access to every group in
its closed neighborhood.  Cockayne and Hedetniemi introduced and initiated
the systematic study of this parameter~\cite{CockayneHedetniemi1977}.  Its
standard network interpretation views the parts of such a partition as
separate resource or communication groups, each of which remains locally
accessible throughout the network; see also the later complexity treatment of
Riege and Rothe~\cite{RiegeRothe2006}.

The algorithmic behavior of the domatic number is already well understood in
several directions.  The threshold problem asking whether $\dom(G)\geq k$ is
$\NP$-complete for every fixed $k\geq3$~\cite[Problem~GT3]{GareyJohnson1979}.
Kaplan and Shamir gave a particularly useful reduction from graph coloring to
the domatic-number problem and established hardness on several perfect graph
classes~\cite{KaplanShamir1994}.  Exact-value and comparison questions lead
beyond $\NP$: Riege and Rothe obtained completeness results for $\DP$, higher
levels of the Boolean hierarchy, and parallel access to $\NP$
\cite{RiegeRothe2006,RiegeRothe2006Survey},
and recent work completes the
fixed-value classification for exact domatic number at the remaining values
three and four~\cite{Spakowski2026ExactDomatic}.
These results concern the
value of $\dom(G)$.  The present paper instead asks how robust that value is
under the loss of a single edge.

The established notion of \emph{domatic criticality} is precisely this
edge-deletion sensitivity: a graph $G$ is domatically critical if
$\dom(G-e)<\dom(G)$ for every edge $e\in E(G)$.  Cockayne's 1978 survey
appears already to have proposed the study of this class
\cite[Problem~6]{Cockayne1978Survey}; Zelinka then gave the first dedicated
formal treatment and proved a basic structural theorem for domatically
critical graphs~\cite{Zelinka1980}.  Subsequent work investigated their
relation to domatically full graphs and developed the theory further
\cite{Rall1990,ZverovichZverovich1991}.  In particular, Rall, and later Zverovich and Zverovich, treated the
low-domatic-number structure.  Zverovich and Zverovich also studied the
corresponding cocritical notion and proved that joining a clique preserves and
reflects domatic criticality
\cite[Proposition~4]{ZverovichZverovich1991}.  Thus domatic criticality is a
well-established structural graph-theoretic notion, rather than a new
variation introduced for complexity purposes.

Criticality problems form a natural meeting point of graph structure and the
Boolean hierarchy.  Papadimitriou and Yannakakis introduced $\DP$ and
identified criticality, exact-value, and uniqueness questions as central
sources of problems in this class~\cite{PapadimitriouYannakakis1984}.  The
Boolean hierarchy over $\NP$ was subsequently developed systematically by Cai
et al.~\cite{CaiEtAl1988BooleanI,CaiEtAl1989BooleanII}; for broader background
on exact and minimal graph problems, see also Wagner's work and the survey of
Riege and Rothe~\cite{Wagner1987,Wagner1990,RiegeRothe2006Survey}.  Among the
classical examples, graph minimal uncolorability is $\DP$-complete
\cite{CaiMeyer1987}, as is minimal unsatisfiability
\cite{PapadimitriouWolfe1988}.  More recently, Burjons et al. proved
$\DP$-completeness for edge-minimal $k$-uncolorability and obtained a
$\ThetaTwoP$-completeness result for a vertex-cover criticality problem
\cite{BurjonsEtAl2024}.  The related notion of stability under local graph
modifications has likewise led to $\ThetaTwoP$-complete recognition problems
for several graph parameters~\cite{FreiHemaspaandraRothe2022}.

Despite the structural literature on domatically critical graphs, to the best
of our knowledge their recognition complexity has not previously been
classified.  We determine it for every fixed domatic number and give upper and
lower bounds for the unrestricted problem.  For a fixed $k$, let $\DomCrit_k$
contain the graphs of domatic number exactly $k$ for which every edge deletion
lowers the domatic number to $k-1$, and let $\DomCrit$ denote the corresponding
problem without a prescribed value of $k$.

For every fixed $k\geq2$, the problem $\DomCrit_k$ is precisely the
recognition problem for the class of \emph{domatically $k$-critical graphs}
studied in the structural graph-theory literature
\cite{Rall1990,ZverovichZverovich1991}.  Indeed, deleting a single edge can
lower the domatic number by at most one, so the customary requirement
$\dom(G-e)<\dom(G)=k$ is equivalent to $\dom(G-e)=k-1$ for every
$e\in E(G)$.  We also include the elementary case $k=1$ in order to obtain
a complete fixed-target classification.

The fixed-target classification exhibits a sharp transition.  The case
$k=1$ is immediate: $\DomCrit_1$ consists of the nonempty edgeless graphs.  At
$k=2$, the known structural characterization of Rall, also recorded by
Zverovich and Zverovich, identifies the critical graphs as precisely the
nonempty disjoint unions of nontrivial stars
\cite{Rall1990,ZverovichZverovich1991}.  This yields a linear-time recognition
algorithm.  Starting at the next value, however, the problem becomes complete
for the second level of the Boolean hierarchy:
\[
  \begin{aligned}
    \DomCrit_1,\DomCrit_2 &\in\Pclass,\\
    \DomCrit_k &\text{ is }\DP\text{-complete for every fixed }k\geq3.
  \end{aligned}
\]
For the unrestricted problem we prove
\[
  \DomCrit\text{ is }\DP\text{-hard}
  \qquad\text{and}\qquad
  \DomCrit\in\ThetaTwoP.
\]
The exact complexity of the unrestricted problem remains open; in particular,
we do not obtain $\ThetaTwoP$-hardness.

The main technical contribution is the hardness proof at target value three.
We reduce from edge-minimal $3$-uncolorability, shown $\DP$-complete by
Burjons et al.~\cite[Theorem~8]{BurjonsEtAl2024}.  The local
encoding begins with the degree-two forcing principle that also underlies the
Kaplan--Shamir reduction~\cite{KaplanShamir1994}: in a $3$-domatic coloring,
a vertex of degree two and its two neighbors must receive the three distinct
colors.  Hence a subdivided edge can enforce the inequality constraint of a
proper $3$-coloring.

Criticality requires additional control.  For each edge $e$ of the input graph
$G$, we build a switch component $C(G,e)$.  Private triangles ensure that every copy of an
original vertex sees all three colors, while a distinguished switch edge
selectively disables the coloring constraint represented by $e$.  The
component is $3$-domatic exactly when $G-e$ is $3$-colorable; after deletion
of its switch edge, it is $3$-domatic exactly when $G$ itself is
$3$-colorable.  Taking the disjoint union of one component $C(G,e)$ for each edge
$e\in E(G)$ converts these local correspondences into the universal condition
in domatic criticality.  The same construction also verifies that deletion of
every non-switch edge lowers the domatic number, and therefore yields a single
reduction that proves both $\DP$-completeness of $\DomCrit_3$ and $\DP$-hardness
of unrestricted $\DomCrit$.  The private-triangle and switch mechanisms are
specific to the present reduction; only the underlying degree-two forcing idea
is shared with the Kaplan--Shamir construction.

Finally, the target-three result extends uniformly to all fixed values
$k\geq3$.  The classical identity
$\dom(G+K_r)=\dom(G)+r$, originating with Cockayne and Hedetniemi
\cite{CockayneHedetniemi1977}, combines with the clique-join theorem of
Zverovich and Zverovich~\cite[Proposition~4]{ZverovichZverovich1991} to lift
criticality from $G$ to $G+K_r$.  This produces the complete fixed-target
classification.  For unrestricted recognition, standard domatic-threshold
queries give the $\ThetaTwoP$ upper bound.

The remainder of the paper is organized as follows.
Section~\ref{sec:preliminaries} introduces domatic colorings, the recognition
problems, the relevant complexity classes, and the elementary structural facts
used later.  Section~\ref{sec:edge-criticality} proves the target-two
characterization, the fixed-target upper bound, the switch reduction at target
three, the lifting theorem, and the bounds for the unrestricted problem.
Section~\ref{sec:conclusion} concludes with the remaining open questions.

%% file: preliminaries.tex
\section{Preliminaries}
\label{sec:preliminaries}

\subsection{Graphs, domination, and domatic colorings}
\label{subsec:domatic-definitions}

All graphs considered in this paper are finite, simple, and undirected.  For a
 graph $G$, we write $V(G)$ and $E(G)$ for its vertex and edge sets.  For
$v\in V(G)$, the open and closed neighborhoods of $v$ are
\[
  N_G(v)=\{u\in V(G):\{u,v\}\in E(G)\}
  \qquad\text{and}\qquad
  N_G[v]=N_G(v)\cup\{v\},
\]
respectively.  We write $\deg_G(v)=|N_G(v)|$ and, for nonempty $G$,
\[
  \delta(G)=\min_{v\in V(G)}\deg_G(v).
\]
A vertex of degree zero is \emph{isolated}.  The graph obtained by deleting
an edge $e\in E(G)$ is denoted by $G-e$.

The disjoint union of graphs $G_1,\ldots,G_t$ is denoted by
\[
  \mathop{\dot\bigcup}_{i=1}^{t}G_i.
\]
When the given graphs are not already vertex-disjoint, this notation refers to
vertex-disjoint isomorphic copies.  For vertex-disjoint graphs $G$ and $H$,
their \emph{join}, denoted by $G+H$, is obtained from their disjoint union by
adding every edge with one endpoint in $V(G)$ and the other in $V(H)$.  We
write $K_0$ for the empty graph, so that $G+K_0=G$.

A set $D\subseteq V(G)$ is a \emph{dominating set} of $G$ if
$D\cap N_G[v]\neq\varnothing$ for every $v\in V(G)$.  A \emph{$k$-domatic
partition} of $G$ is a partition of $V(G)$ into $k$ dominating sets.  The
domatic number, introduced and systematically studied by Cockayne and
Hedetniemi~\cite{CockayneHedetniemi1977}, is the largest number of classes in
such a partition.

It is often useful to regard a domatic partition as a coloring.  This
viewpoint makes the local constraints in our constructions particularly
transparent.

\begin{definition}[Domatic coloring]
\label{def:domatic-coloring}
Let $G$ be nonempty and let $k\geq1$.  A \emph{$k$-domatic coloring} of
$G$ is a function
\[
  c:V(G)\longrightarrow[k],
  \qquad [k]=\{1,\ldots,k\},
\]
such that
\[
  c(N_G[v])=[k]
  \qquad\text{for every }v\in V(G).
\]
Thus every vertex sees every color in its closed neighborhood.  Equivalently,
each color class is a dominating set.  The \emph{domatic number} $\dom(G)$
is the largest $k$ for which $G$ has a $k$-domatic coloring.
\end{definition}

For the degenerate case arising when the sole vertex of $K_1$ is deleted, we
set
\[
  \dom(\varnothing)=0.
\]
If several color classes of a domatic coloring are identified, the resulting
coloring remains domatic.  More precisely, if $c:V(G)\to[k]$ is domatic and
$\pi:[k]\to[\ell]$ is surjective, then $\pi\circ c$ is an $\ell$-domatic
coloring.

A \emph{proper $k$-coloring} of a graph $G$ is a function
$\gamma:V(G)\to[k]$ such that adjacent vertices receive different colors.  A
graph is \emph{$k$-colorable} if it has a proper $k$-coloring.

\subsection{Domatic criticality}
\label{subsec:criticality-definitions}

The established notion of domatic criticality is defined by edge deletion.
Following Zelinka, Rall, and Zverovich and
Zverovich~\cite{Zelinka1980,Rall1990,ZverovichZverovich1991}, a graph is
\emph{domatically critical} if deleting any one of its edges lowers its domatic
number.  A domatically critical graph of domatic number $k$ is also called
\emph{domatically $k$-critical}.  This paper studies the computational
complexity of recognizing these graphs.

Proposition~\ref{prop:edge-deletion-bound} shows that, for every edge
$e\in E(G)$,
\[
  \dom(G)-1\leq\dom(G-e)\leq\dom(G).
\]
Since the domatic number is integer-valued, the following three conditions are
equivalent:
\[
\begin{aligned}
  \dom(G-e)<\dom(G)
    &\Longleftrightarrow \dom(G-e)\leq\dom(G)-1,\\
    &\Longleftrightarrow \dom(G-e)=\dom(G)-1.
\end{aligned}
\]
We shall use these strict-decrease, upper-bound, and exact-decrease
formulations interchangeably.

\begin{definition}[Edge-deletion domatic criticality]
\label{def:edge-criticality}
For every fixed integer $k\geq1$, define
\[
  \DomCrit_k
  =
  \left\{
    G\ \middle|\
    \begin{array}{l}
      \dom(G)=k,\\
      \dom(G-e)=k-1\text{ for every }e\in E(G)
    \end{array}
  \right\}.
\]
Equivalently,
\[
  \DomCrit_k
  =
  \left\{
    G\ \middle|\
    \begin{array}{l}
      \dom(G)=k,\\
      \dom(G-e)\leq k-1\text{ for every }e\in E(G)
    \end{array}
  \right\}.
\]
The unrestricted recognition problem is
\[
  \DomCrit
  =
  \left\{
    G\ \middle|\
    \dom(G-e)=\dom(G)-1
    \text{ for every }e\in E(G)
  \right\}.
\]
Equivalently,
\[
  \DomCrit
  =
  \left\{
    G\ \middle|\
    \dom(G-e)\leq\dom(G)-1
    \text{ for every }e\in E(G)
  \right\}.
\]
\end{definition}

We use the usual convention that a universally quantified statement over an
empty set is true.  Consequently, an edgeless graph satisfies the deletion
condition in Definition~\ref{def:edge-criticality}; membership in a fixed-target
class $\DomCrit_k$ still additionally requires $\dom(G)=k$.

\subsection{Complexity-theoretic background}
\label{subsec:complexity-background}

All reductions in this paper are polynomial-time many-one reductions.  We write
$A\leq_m^p B$ if there is a polynomial-time computable function $f$ such that
$x\in A$ if and only if $f(x)\in B$.

For complexity classes $\mathcal C$ and $\mathcal D$, let
\[
  \mathcal C\wedge\mathcal D
  =
  \{A\cap B:A\in\mathcal C\text{ and }B\in\mathcal D\}.
\]
Papadimitriou and Yannakakis introduced the class
\[
  \DP=\NP\wedge\coNP
      =\{A-B:A,B\in\NP\}
\]
and identified exact-value, uniqueness, and criticality questions as natural
sources of problems in this class~\cite{PapadimitriouYannakakis1984}.  The
class $\DP$ is the second level of the Boolean hierarchy over
$\NP$~\cite{CaiEtAl1988BooleanI,CaiEtAl1989BooleanII}.

The unrestricted recognition problem will be shown to lie in
\(\Theta_2^p\).  This class admits the standard equivalent
characterizations
\[
  \Theta_2^p
  =
  \mathrm{P}^{\mathrm{NP}}_{\parallel}
  =
  \mathrm{P}^{\mathrm{NP}[O(\log n)]},
\]
where the first oracle model permits polynomially many nonadaptive
\(\mathrm{NP}\) queries and the second permits \(O(\log n)\) adaptive
\(\mathrm{NP}\) queries
\cite{Hemachandra1987,BussHay1988,KoblerSchoningWagner1987,Wagner1990}.

In the domatic-number setting, exact and comparison variants already
yield natural complete problems in \(\mathrm{DP}\), higher levels of the
Boolean hierarchy, and \(\Theta_2^p\)~\cite{RiegeRothe2006}.

The basic decision version of the domatic-number problem asks whether a given
graph admits a domatic partition of a prescribed size.  For a fixed integer
$k\geq1$, write
\[
  k\text{-}\mathrm{DNP}
  =\{G:\dom(G)\geq k\}.
\]
Membership in $\NP$ is witnessed by a $k$-domatic coloring.  It is classical
that $k\text{-}\mathrm{DNP}$ is $\NP$-complete for every fixed
$k\geq3$; see Garey and Johnson~\cite[Problem~GT3]{GareyJohnson1979}, who
attribute the underlying reduction to unpublished work of Garey, Johnson, and
Tarjan.

\subsection{Basic properties of the domatic number}
\label{subsec:elementary-facts}

We shall use the following classical and elementary facts throughout the paper.
Their short proofs are given so that the later arguments can refer to them
directly.

\Needspace{9\baselineskip}
The classical minimum-degree bound is due to Cockayne and
Hedetniemi~\cite{CockayneHedetniemi1977}.

\begin{proposition}[Minimum-degree bound]
\label{prop:degree-bound}
For every nonempty graph $G$,
\[
  \dom(G)\leq\delta(G)+1.
\]
\end{proposition}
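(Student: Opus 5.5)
The plan is a direct counting argument on a vertex of minimum degree. Fix a vertex $v$ with $\deg_G(v)=\delta(G)$; it exists because $G$ is nonempty and finite. Let $k=\dom(G)$, and choose a $k$-domatic coloring $c:V(G)\to[k]$ as in Definition~\ref{def:domatic-coloring}. Such a coloring exists because $\dom(G)$ is defined as the largest $k$ for which one exists. Also $\dom(G)\geq1$ always, since the constant coloring with value $1$ is $1$-domatic, so the maximum is taken over a nonempty set.

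The domatic condition at $v$ gives $c(N_G[v])=[k]$. So the restriction of $c$ to $N_G[v]$ is a surjection onto $[k]$, and therefore
\[
  k=|[k]|=|c(N_G[v])|\leq|N_G[v]|=\deg_G(v)+1=\delta(G)+1.
\]
This is exactly the claimed bound $\dom(G)\leq\delta(G)+1$.

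There is no real obstacle; the only point needing care is the set of admissible $k$. The definition makes $\dom(G)$ a maximum over $k$, so we must know that this set is bounded. The same computation shows it is: for every $k$ admitting a $k$-domatic coloring, $k\leq\delta(G)+1$. Hence the maximum exists and obeys the bound. In terms of partitions the argument says the same thing: each of the $k$ dominating classes must meet $N_G[v]$, and these meetings are disjoint, so $N_G[v]$ has at least $k$ vertices.
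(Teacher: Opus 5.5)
Your proof is correct and is essentially the paper's argument: fix a vertex $v$ of minimum degree, observe that every color of a domatic coloring must occur in $N_G[v]$, and conclude $k\leq|N_G[v]|=\delta(G)+1$. Your extra remark is a welcome addition: the same count bounds every admissible $k$, so the maximum defining $\dom(G)$ exists.
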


\begin{proof}
Let $v$ be a vertex of minimum degree.  Every color in a domatic coloring
must occur in $N_G[v]$.  Since $|N_G[v]|=\delta(G)+1$, there can be at most
$\delta(G)+1$ colors.
\end{proof}

A classical theorem attributed to Ore states that, in a graph without isolated
vertices, the complement of every inclusion-minimal dominating set is also
dominating; see Cockayne~\cite[Proposition~1]{Cockayne1978Survey}.  It yields
the following characterization.

\begin{proposition}[Isolated vertices and two domatic colors]
\label{prop:one-two-domatic}
A nonempty graph $G$ has domatic number one if and only if it has an isolated
vertex.  Equivalently,
\[
  \dom(G)\geq2
  \quad\Longleftrightarrow\quad
  G\text{ has no isolated vertex}.
\]
\end{proposition}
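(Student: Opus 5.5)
We prove the two directions of the equivalence $\dom(G)\geq 2\Leftrightarrow G$ has no isolated vertex separately; the statement about domatic number one then follows at once.

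For the forward direction, suppose $G$ has an isolated vertex $v$. Then $N_G[v]=\{v\}$, so any domatic coloring $c$ satisfies $c(N_G[v])=\{c(v)\}$, which is a single color. Hence no $k$-domatic coloring with $k\geq2$ exists, and $\dom(G)\leq1$. This is just Proposition~\ref{prop:degree-bound} with $\delta(G)=0$. Conversely, every nonempty graph has the constant $1$-domatic coloring, because $V(G)$ itself is dominating. Therefore $\dom(G)\geq1$ always, and an isolated vertex forces $\dom(G)=1$ exactly.

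For the backward direction, assume $G$ has no isolated vertex. We exhibit a partition of $V(G)$ into two dominating sets.

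1. Since $V(G)$ is dominating and $G$ is finite, choose an inclusion-minimal dominating set $D$, for instance by deleting vertices greedily while domination is preserved.
2. Show that $V(G)\setminus D$ is dominating. This is Ore's theorem, which the paper allows us to cite via Cockayne~\cite[Proposition~1]{Cockayne1978Survey}.
3. Colour $D$ with $1$ and its complement with $2$. This is a $2$-domatic coloring, so $\dom(G)\geq2$.

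Step 2 is the main point; if a self-contained argument is wanted, we verify Ore's theorem as follows. Let $v\in D$. Because $D$ is minimal, $D\setminus\{v\}$ is not dominating, so some vertex $u$ has $N_G[u]\cap D\subseteq\{v\}$. There are two cases.
\begin{itemize}
\item If $u=v$, then $v$ has no neighbor in $D$. Since $v$ is not isolated, it has some neighbor, and that neighbor lies in $V(G)\setminus D$.
\item If $u\neq v$, then $u\notin D$, and $u$ is adjacent to $v$ because $D$ dominates $u$.
\end{itemize}
In both cases $v$ has a neighbor outside $D$. Every vertex outside $D$ is dominated by itself. Hence $V(G)\setminus D$ is dominating, and it is nonempty.

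The only place the hypothesis of no isolated vertex is used is the first case, where $u=v$. That case is exactly where the argument would otherwise fail.
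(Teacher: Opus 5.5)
Your proof is correct and follows the paper's argument. An isolated vertex $v$ with $N_G[v]=\{v\}$ caps the domatic number at one, and without isolated vertices an inclusion-minimal dominating set together with its complement gives a $2$-domatic partition by Ore's theorem; the only difference is that you also give a correct self-contained proof of Ore's theorem, splitting into the cases $u=v$ and $u\neq v$, whereas the paper only cites it.
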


\begin{proof}
Suppose first that $v$ is isolated.  Then $N_G[v]=\{v\}$, so no domatic
coloring can make $v$ see two colors.  Since every nonempty graph has a
one-color domatic coloring, $\dom(G)=1$.

Conversely, suppose that $G$ has no isolated vertex.  Choose an
inclusion-minimal dominating set $D$.  By the theorem attributed to Ore,
$V(G)\setminus D$ is also dominating.  Hence
\[
  V(G)=D\mathbin{\dot\cup}\bigl(V(G)\setminus D\bigr)
\]
is a partition into two dominating sets, and therefore $\dom(G)\geq2$.
\end{proof}

The following local observation is central to our hardness reduction.

\begin{lemma}[Degree-two forcing]
\label{lem:degree-two-forcing}
Let $c$ be a $3$-domatic coloring of $G$, and let $x$ be a vertex of degree
two with neighbors $u$ and $v$.  Then $c(u)$, $c(x)$, and $c(v)$ are the
three distinct colors.  In particular, $c(u)\neq c(v)$.
\end{lemma}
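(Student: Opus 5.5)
The argument is a direct pigeonhole count on the closed neighborhood of $x$, using nothing beyond Definition~\ref{def:domatic-coloring}. Since $x$ has degree two with neighbors $u$ and $v$, its closed neighborhood is $N_G[x]=\{u,x,v\}$, a set of exactly three vertices; in particular $u\neq v$ and both differ from $x$, because the graph is simple.

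Because $c$ is a $3$-domatic coloring, the definition gives
\[
  c(\{u,x,v\})=c(N_G[x])=[3].
\]
So the map $c$ restricted to the three-element set $\{u,x,v\}$ is surjective onto the three-element set $[3]$. A surjection between finite sets of equal size is a bijection. Hence $c(u)$, $c(x)$, and $c(v)$ are pairwise distinct and together form all three colors. The conclusion $c(u)\neq c(v)$ is then a special case.

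The only point that needs care is the count $|N_G[x]|=3$. It relies on the standing assumption that graphs are simple, so that $u\neq v$ and neither equals $x$. This is the same neighborhood-size observation that underlies Proposition~\ref{prop:degree-bound}, here applied at a vertex attaining the bound $\delta+1=3$ with equality.
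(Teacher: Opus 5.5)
Your proof is correct and follows the paper's argument: $N_G[x]=\{u,x,v\}$ has exactly three vertices and must contain all three colors, so $c$ is a bijection there. Your added remarks on simplicity and surjection between equal-size sets only spell out what the paper's one-line proof leaves implicit.
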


\begin{proof}
The closed neighborhood $N_G[x]=\{u,x,v\}$ has three vertices and must
contain all three colors.
\end{proof}

The domatic number behaves componentwise under disjoint union.

\begin{lemma}[Disjoint-union formula]
\label{lem:disjoint-union}
Let $G_1,\ldots,G_t$ be nonempty graphs, where $t\geq1$.  Then
\[
  \dom\!\left(\mathop{\dot\bigcup}_{i=1}^{t}G_i\right)
  =
  \min_{1\leq i\leq t}\dom(G_i).
\]
\end{lemma}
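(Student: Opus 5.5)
The plan is to prove the two inequalities separately, writing $H=\dot\bigcup_{i=1}^{t}G_i$ and $m=\min_i\dom(G_i)$. The only fact needed is that closed neighborhoods in a disjoint union are computed inside the component: for $v\in V(G_i)$ we have $N_H[v]=N_{G_i}[v]$, because $H$ has no edges between distinct components.

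For $\dom(H)\geq m$, I will glue colorings together. For each $i$, fix a $\dom(G_i)$-domatic coloring $c_i$ of $G_i$; it exists because $G_i$ is nonempty, so $\dom(G_i)\geq1$. Since $\dom(G_i)\geq m$, compose $c_i$ with a surjection $\pi_i:[\dom(G_i)]\to[m]$. By the remark after Definition~\ref{def:domatic-coloring}, $\pi_i\circ c_i$ is an $m$-domatic coloring of $G_i$. Let $c$ agree with $\pi_i\circ c_i$ on $V(G_i)$. For any $v\in V(G_i)$,
\[
  c(N_H[v])=(\pi_i\circ c_i)(N_{G_i}[v])=[m],
\]
so $c$ is an $m$-domatic coloring of $H$.

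For $\dom(H)\leq m$, I will restrict colorings. Let $c$ be a $k$-domatic coloring of $H$ with $k=\dom(H)$, and fix any index $i$. Each $v\in V(G_i)$ satisfies $c(N_{G_i}[v])=c(N_H[v])=[k]$, so $c|_{V(G_i)}$ is a $k$-domatic coloring of $G_i$. Hence $\dom(G_i)\geq k$ for every $i$, which gives $m\geq k$.

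Neither direction is a real obstacle. The only delicate point is in the first direction: the colorings of the different components must use a common number $m$ of colors. That is exactly why we first merge classes with the surjections $\pi_i$. Nonemptiness of each $G_i$ is what guarantees that each $\dom(G_i)\geq1$ is defined and that $m\geq1$.
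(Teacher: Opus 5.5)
Your proposal is correct and follows essentially the same route as the paper: restrict a domatic coloring of the union to each component for the upper bound, and merge color classes down to $m=\min_i\dom(G_i)$ colors in each component before gluing for the lower bound. Your explicit surjections $\pi_i$ just make precise the paper's phrase ``identify color classes \dots\ if necessary.''
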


\begin{proof}
Put $G=\dot\bigcup_{i=1}^{t}G_i$.  If $c$ is an $\ell$-domatic coloring of
$G$, then $N_G[v]=N_{G_i}[v]$ for every $v\in V(G_i)$.  The restriction of
$c$ to each $G_i$ is therefore $\ell$-domatic, and hence
\[
  \dom(G)\leq\min_i\dom(G_i).
\]

Conversely, let $k=\min_i\dom(G_i)$.  For each $i$, identify color classes
of a maximum domatic coloring of $G_i$, if necessary, to obtain a
$k$-domatic coloring of $G_i$.  Combining these colorings gives a
$k$-domatic coloring of $G$.
\end{proof}

The following clique-addition identity is another classical result of Cockayne
and Hedetniemi~\cite{CockayneHedetniemi1977}.

\begin{proposition}[Clique-addition identity]
\label{prop:clique-join}
For every graph $G$ and every integer $r\geq0$,
\[
  \dom(G+K_r)=\dom(G)+r.
\]
\end{proposition}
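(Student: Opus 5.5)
By induction on $r$ it suffices to treat $r=1$, since $G+K_r=(G+K_{r-1})+K_1$; the case $r=0$ is trivial because $G+K_0=G$. So let $w$ be the single vertex of $K_1$ and put $H=G+K_1$. I will show $\dom(H)=\dom(G)+1$ by proving the two inequalities separately. If $G$ is the empty graph, then $H=K_1$ and $\dom(K_1)=1=0+1$, so I assume $G$ is nonempty from now on.

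For the lower bound, take a $k$-domatic coloring $c$ of $G$ with $k=\dom(G)$. Extend it to $H$ by setting $c(w)=k+1$. The vertex $w$ is adjacent to everything, so $N_H[w]=V(H)$, and this set sees all $k+1$ colors because $c$ is surjective onto $[k]$. For $v\in V(G)$ we have $N_H[v]=N_G[v]\cup\{w\}$, which contains $[k]$ by assumption and also contains the color $k+1$. Hence $\dom(H)\geq k+1$.

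The upper bound is the main step. Let $c$ be an $\ell$-domatic coloring of $H$; I want a $(\ell-1)$-domatic coloring of $G$. Without loss of generality $c(w)=\ell$, after renaming colors. The color class $c^{-1}(\ell)$ may contain vertices of $G$ besides $w$, so I define a new coloring $c'$ on $G$ by recoloring every vertex of $G$ that has color $\ell$ with color $1$, and keeping all other colors. This is the surjection $\pi:[\ell]\to[\ell-1]$ with $\pi(\ell)=1$, applied to $c$ restricted to $V(G)$. If $\ell=1$ the bound is trivial, so assume $\ell\geq2$.

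It remains to check that $c'$ is domatic on $G$. For $v\in V(G)$ we have $c(N_H[v])=[\ell]$ and $N_H[v]=N_G[v]\cup\{w\}$. Since $w$ contributes only color $\ell$, every color in $[\ell-1]$ already appears in $N_G[v]$ under $c$. The recoloring changes only vertices of color $\ell$, so each of these colors still appears under $c'$. Therefore $c'(N_G[v])=[\ell-1]$, and $\dom(G)\geq\ell-1$. The only delicate point is that color classes of $G$ other than $w$'s must survive the deletion of $w$, and they do because $w$ carries exactly one color. Combining the two bounds gives $\dom(G+K_1)=\dom(G)+1$, and the induction yields the general identity.
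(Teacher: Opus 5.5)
Your proof is correct and follows essentially the paper's argument: fresh colors on the clique give the lower bound, and for the upper bound every color absent from the clique must already occur in $N_G[v]$, so merging the clique's color(s) into a surviving color yields a domatic coloring of $G$. The only difference is that you reduce to $r=1$ by induction, which replaces the paper's bookkeeping with $S=c(Q)$, $|S|\leq r$, and the separate case $T=\varnothing$ by your trivial case $\ell=1$.
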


\begin{proof}
If $G=\varnothing$, then $G+K_r=K_r$ and
$\dom(K_r)=r=\dom(G)+r$.  Assume that $G$ is nonempty and put
$d=\dom(G)$.  Take a $d$-domatic coloring of $G$ and give the $r$ vertices
of $K_r$ pairwise distinct new colors.  Every vertex of $G$ sees the old
colors within $G$ and all new colors on the clique, while every clique
vertex is universal.  Hence $\dom(G+K_r)\geq d+r$.

For the reverse inequality, let $c:V(G+K_r)\to[t]$ be a $t$-domatic
coloring.  Put $Q=V(K_r)$, $S=c(Q)$, $s=|S|$, and
$T=[t]\setminus S$.  Then $s\leq r$.  If $T=\varnothing$, then
$t=s\leq r\leq d+r$.  Suppose that $T\neq\varnothing$.  For every
$v\in V(G)$ and every color $\alpha\in T$, the color $\alpha$ must occur
in $N_G[v]$, because it does not occur on $Q$.  Recolor every vertex of
$G$ whose color lies in $S$ with one fixed color from $T$, leaving all
colors in $T$ unchanged.  After renaming the colors, this is a
$|T|$-domatic coloring of $G$.  Thus $t-s=|T|\leq d$, and therefore
$t\leq d+r$.
\end{proof}

Rall proved the following bound for the deletion of a single
edge~\cite[Proposition~3]{Rall1990}.

\begin{proposition}[Edge-deletion bound]
\label{prop:edge-deletion-bound}
For every graph $G$ and every edge $e\in E(G)$,
\[
  \dom(G)-1\leq\dom(G-e)\leq\dom(G).
\]
\end{proposition}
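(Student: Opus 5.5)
The proof splits into the two inequalities, and both are established by transforming a domatic coloring of one graph into one of the other. Throughout, let $e=\{u,v\}$ and set $H=G-e$. Then $V(H)=V(G)$, and closed neighborhoods change only at the endpoints: $N_H[u]=N_G[u]\setminus\{v\}$, $N_H[v]=N_G[v]\setminus\{u\}$, and $N_H[w]=N_G[w]$ for every $w\notin\{u,v\}$. The graph is nonempty because it has an edge.

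For the upper bound $\dom(G-e)\leq\dom(G)$, take any $k$-domatic coloring $c$ of $H$. Since $N_H[w]\subseteq N_G[w]$ for every vertex $w$, we get $c(N_G[w])\supseteq c(N_H[w])=[k]$. Hence $c$ is also a $k$-domatic coloring of $G$. This step is immediate: adding an edge can only enlarge closed neighborhoods.

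For the lower bound, put $d=\dom(G)$ and take a $d$-domatic coloring $c$ of $G$. If $d=1$, the bound $\dom(H)\geq0$ is trivial, so assume $d\geq2$. In $H$, only $u$ and $v$ can have lost a color: $u$ can lose at most the color $c(v)$, and $v$ can lose at most the color $c(u)$.

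\begin{itemize}
\item If $c(u)=c(v)$, then $u$ still sees $c(u)=c(v)$ itself, so $u$ loses nothing, and likewise $v$ loses nothing. Thus $c$ is still $d$-domatic on $H$.
\item If $c(u)\neq c(v)$, choose a surjection $\pi:[d]\to[d-1]$ with $\pi(c(u))=\pi(c(v))$, which identifies those two colors. By the remark in the preliminaries, $\pi\circ c$ is $(d-1)$-domatic on $G$. Under $\pi\circ c$, the vertex $u$ has the color of $v$ itself, so $u$ loses nothing in $H$, and symmetrically neither does $v$. Hence $\pi\circ c$ is a $(d-1)$-domatic coloring of $H$.
\end{itemize}

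In either case $\dom(H)\geq d-1$, which proves the lower bound.

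The only step needing care is this lower bound, specifically identifying exactly which colors the endpoints can lose when $e$ is removed. Once that is pinned down, merging the two endpoint colors makes both losses harmless at the cost of a single color.
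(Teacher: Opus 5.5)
Your proof is correct and follows essentially the same route as the paper: the upper bound comes from the enlargement of closed neighborhoods, and the lower bound splits on whether $c(u)=c(v)$. When the colors differ, you merge them, so each endpoint sees the lost color on itself.
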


\begin{proof}
Every domatic coloring of $G-e$ is also a domatic coloring of $G$, since
adding an edge can only enlarge closed neighborhoods.  Hence
$\dom(G-e)\leq\dom(G)$.

For the reverse inequality, put $d=\dom(G)$, choose a $d$-domatic coloring
$c$, and write $e=xy$.  If $c(x)=c(y)$, then $c$ remains domatic after
$e$ is deleted: only the closed neighborhoods of $x$ and $y$ change, and
each endpoint still sees their common color on itself.

Suppose that $c(x)\neq c(y)$.  Identify the colors $c(x)$ and $c(y)$ and
leave all other colors distinct.  Every vertex other than $x$ and $y$ has
the same closed neighborhood after the deletion.  The vertex $x$ may lose
the old color of $y$, but that color has been identified with $c(x)$,
which still occurs at $x$; the argument for $y$ is symmetric.  The resulting
coloring is $(d-1)$-domatic on $G-e$.
\end{proof}

\begin{corollary}
\label{cor:edge-exact}
Deleting an edge lowers the domatic number if and only if it lowers it by
exactly one.
\end{corollary}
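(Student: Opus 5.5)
This follows directly from Proposition~\ref{prop:edge-deletion-bound} together with the fact that the domatic number is an integer. Fix $G$ and an edge $e\in E(G)$, and put $d=\dom(G)$ and $d'=\dom(G-e)$. By Proposition~\ref{prop:edge-deletion-bound} we have $d-1\leq d'\leq d$, so $d'\in\{d-1,d\}$.

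For the forward direction, suppose that deleting $e$ lowers the domatic number, that is, $d'<d$. Since $d'$ is an integer, $d'\leq d-1$. Combined with the lower bound $d'\geq d-1$, this gives $d'=d-1$, so the decrease is exactly one. For the reverse direction, if $d'=d-1$ then trivially $d'<d$.

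The only point needing care is the degenerate case, which I would mention in one sentence. Proposition~\ref{prop:edge-deletion-bound} applies to every graph possessing an edge, and such a graph has at least two vertices. Hence $G-e$ is nonempty and the convention $\dom(\varnothing)=0$ is never invoked. I expect no real obstacle here: the corollary simply restates the equivalence of the strict-decrease and exact-decrease formulations already noted in Subsection~\ref{subsec:criticality-definitions}.
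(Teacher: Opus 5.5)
Your proof is correct and takes the same route as the paper: the two-sided bound of Proposition~\ref{prop:edge-deletion-bound} combined with integrality of $\dom$, which is the equivalence already spelled out in Subsection~\ref{subsec:criticality-definitions}. Your check that $G-e$ keeps both endpoints of $e$, and so is nonempty, is accurate but not needed.
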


\subsection{Edge-minimal three-uncolorability}
\label{subsec:source-problems}

Our hardness reduction starts from edge-minimal three-uncolorability.

\begin{definition}[Edge-minimal three-uncolorability]
\label{def:minimal-uncolorability}
Define
\[
  \EMinUncol
  =
  \left\{
    G\ \middle|\
    \begin{array}{l}
      G\text{ is not $3$-colorable, and}\\
      G-e\text{ is $3$-colorable}\\
      \text{for every }e\in E(G)
    \end{array}
  \right\}.
\]
No connectedness assumption is imposed.
\end{definition}

Burjons et~al. proved that edge-minimal $k$-uncolorability is
$\DP$-complete for every fixed $k\geq3$~\cite[Theorem~8]{BurjonsEtAl2024}.
We shall use the case of three colors.

\begin{theorem}[Burjons et al.]
\label{thm:source-problems}
The problem $\EMinUncol$ is $\DP$-complete under polynomial-time many-one
reductions.
\end{theorem}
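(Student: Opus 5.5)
The statement is the case $k=3$ of \cite[Theorem~8]{BurjonsEtAl2024}, so the plan is to check membership in $\DP$ directly and to import hardness from that theorem.

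\textbf{Membership in $\DP$.} I would write $\EMinUncol=A\cap B$ with
\[
  A=\{G: G-e\text{ is $3$-colorable for every }e\in E(G)\},
  \qquad
  B=\{G: G\text{ is not $3$-colorable}\}.
\]
The set $A$ is in $\NP$. A certificate is a list of $|E(G)|$ functions $\gamma_e:V(G)\to[3]$, one for each edge. Checking that each $\gamma_e$ is a proper coloring of $G-e$ takes polynomial time. An edgeless graph is accepted vacuously, which matches the convention fixed in the preliminaries.

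The set $B$ is the complement of $3$-colorability, so it is in $\coNP$. Hence $\EMinUncol\in\NP\wedge\coNP=\DP$.

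\textbf{Hardness.} For $\DP$-hardness I would invoke \cite[Theorem~8]{BurjonsEtAl2024} with $k=3$. That theorem concerns exactly this problem: it requires non-$k$-colorability together with $k$-colorability after each single edge deletion, and it imposes no connectedness requirement. So no translation is needed beyond checking that the definitions agree. If a self-contained argument were wanted, I would reduce from the canonical $\DP$-complete problem $\{(F,F'): F\in\mathrm{3COL},\ F'\notin\mathrm{3COL}\}$, i.e.\ SAT--UNSAT composed with standard $3$-coloring reductions.

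\textbf{The main obstacle.} That self-contained route would be the hard part. The reduction must convert each instance into graphs that are edge-minimal in a controlled way. On the colorable side, one needs a gadget whose edge-minimal uncolorability holds exactly when the instance is colorable. On the uncolorable side, the instance must be packaged as a graph that is critical precisely when it is uncolorable. This is in the spirit of the Cai--Meyer criticality construction \cite{CaiMeyer1987}. The two parts would then be combined so that criticality of the whole graph is equivalent to the conjunction of the two conditions.

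Since the paper treats this as an imported result, I would not reprove it. The proof of the statement therefore consists of the membership check above plus the citation for hardness.
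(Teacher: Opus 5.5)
Your proposal is correct and matches the paper, which likewise gives no proof of its own and imports the result directly from Burjons et al.\ (Theorem~8, case $k=3$). Your explicit membership check, writing $\EMinUncol$ as the intersection of an $\NP$ set (certified by one proper $3$-coloring of $G-e$ per edge $e$) and the $\coNP$ set of non-$3$-colorable graphs, is a correct supplement. The paper does not spell this part out separately.
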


%% file: edge-criticality.tex
\section{Domatic criticality}
\label{sec:edge-criticality}

In this section, we determine the complexity of recognizing domatically
critical graphs.  We first settle the case $k=2$, then prove
$\DP$-completeness for every fixed $k\geq3$, and finally
establish upper and lower bounds for the unrestricted problem.

\Needspace{8\baselineskip}
\subsection{The case \texorpdfstring{{\boldmath $k=2$}}{k=2}}
\label{subsec:edge-target-two}

Rall characterized the domatically critical graphs of domatic number two as
precisely the nonempty disjoint unions of nontrivial stars~\cite{Rall1990}.
Zverovich and Zverovich later also recorded this
characterization~\cite[p.~278]{ZverovichZverovich1991}.  We include a direct
proof because the argument is short and also yields an immediate recognition
algorithm.  A
\emph{nontrivial star} is a graph $K_{1,r}$ with $r\geq1$.

\begin{lemma}
\label{lem:edge-critical-edge-leaf}
Let $H\in\DomCrit_2$.  Every edge of $H$ has an endpoint of degree one.
\end{lemma}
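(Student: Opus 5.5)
The plan is to argue by contradiction, using only \Cref{prop:one-two-domatic}. Let $H\in\DomCrit_2$, so $\dom(H)=2$ and $\dom(H-e)=1$ for every edge $e$. Fix an edge $e=xy$ and assume, for contradiction, that $\deg_H(x)\geq2$ and $\deg_H(y)\geq2$. We will show that $H-e$ has no isolated vertex. By \Cref{prop:one-two-domatic} this gives $\dom(H-e)\geq2$, which contradicts criticality.

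The first step is to use $\dom(H)=2$. By \Cref{prop:one-two-domatic}, this means $H$ itself has no isolated vertex. The second step is to check the vertices of $H-e$ one at a time. Deleting $e$ changes degrees only at $x$ and $y$, and each of them loses exactly one. So every vertex other than $x,y$ keeps its positive degree from $H$. The vertices $x$ and $y$ end with degree at least $2-1=1$ by assumption. Hence $H-e$ has no isolated vertex.

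The third step is to apply \Cref{prop:one-two-domatic} to $H-e$, which is nonempty because it has the same vertex set as $H$. This gives $\dom(H-e)\geq2$, contradicting $\dom(H-e)=1$. Therefore one of $x,y$ must have degree one in $H$.

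There is no real obstacle. The only point to check is that "domatic number one" is exactly equivalent to "has an isolated vertex" for nonempty graphs, and that is precisely what \Cref{prop:one-two-domatic} states. The degree bookkeeping after deleting one edge is immediate.
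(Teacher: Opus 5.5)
Your proof is correct and is essentially the paper's argument, run as a contradiction instead of directly: the paper applies Proposition~\ref{prop:one-two-domatic} to $H-e$ to obtain an isolated vertex, and since only $x$ and $y$ change degree, that vertex must be an endpoint of $e$ of degree one in $H$. Your explicit use of $\dom(H)=2$ to rule out isolated vertices already present in $H$ spells out a step that the paper leaves implicit.
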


\begin{proof}
Fix an edge $e=xy$.  Since $H\in\DomCrit_2$, we have
\[
  \dom(H-e)=1.
\]
By Proposition~\ref{prop:one-two-domatic}, the graph $H-e$ has an isolated
vertex.  Only the degrees of $x$ and $y$ change when $e$ is deleted, so one of
these two vertices is isolated in $H-e$.  The corresponding endpoint has
degree one in $H$.
\end{proof}

\begin{lemma}
\label{lem:edge-star-components}
Let $H$ be a connected graph with at least one edge.  If every edge of $H$
has an endpoint of degree one, then $H$ is a nontrivial star.
\end{lemma}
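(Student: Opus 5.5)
The plan is to identify a center vertex and show that every edge meets it. Consider two cases according to the number of vertices of degree at least two.

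If $H$ has no vertex of degree at least two, then every vertex has degree at most one. Since $H$ is connected and has an edge, $H$ has no isolated vertex. Connectedness then forces $H\cong K_2=K_{1,1}$, which is a nontrivial star.

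Otherwise, choose a vertex $z$ with $\deg_H(z)\geq2$. The key step is to show that no other vertex has degree at least two. Suppose $w\neq z$ also had $\deg_H(w)\geq2$. Since $H$ is connected, there is a shortest path $z=p_0,p_1,\ldots,p_m=w$ with $m\geq1$.
\begin{itemize}
\item If $m=1$, the edge $zw$ has no endpoint of degree one, which contradicts the hypothesis.
\item If $m\geq2$, consider the edge $p_0p_1$. Its endpoint $p_0=z$ has degree at least two, so $\deg_H(p_1)=1$. But $p_1$ is adjacent to both $p_0$ and $p_2$, and these are distinct, so $\deg_H(p_1)\geq2$, a contradiction.
\end{itemize}
Hence every vertex other than $z$ has degree at most one.

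It remains to show that every edge is incident with $z$. Let $xy$ be an edge with $z\notin\{x,y\}$. Then $x$ and $y$ both have degree one, so $\{x,y\}$ forms a connected component of $H$ that does not contain $z$. This contradicts connectedness. Therefore every edge contains $z$, and connectedness gives $V(H)=N_H[z]$. Thus $H=K_{1,r}$ with $r=\deg_H(z)\geq2$.

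There is no real obstacle here. The only delicate point is the case $m\geq2$, where one must use that $p_0$ and $p_2$ are distinct neighbors of $p_1$; this holds because the path is shortest, so its vertices are distinct.
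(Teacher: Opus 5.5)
Your proof is correct and takes essentially the same approach as the paper: you split off the $K_2$ case, and your key step is the paper's shortest-path argument from a vertex of degree at least two, where the first edge forces $p_1$ to be a leaf, contradicting its adjacency to $p_2$. The only difference is the order: the paper first shows that every vertex is adjacent to the center and then that the others are leaves, whereas you first show that the other vertices have degree at most one and then deduce that they are adjacent to $z$.
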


\begin{proof}
Choose an edge $xy$.  If both $x$ and $y$ have degree one, connectedness gives
$H=K_2=K_{1,1}$.  Otherwise, after interchanging the endpoints if necessary,
assume that $\deg_H(x)\geq2$.  The hypothesis then gives $\deg_H(y)=1$.

We claim that every vertex different from $x$ is adjacent to $x$.  Suppose
otherwise, and let
\[
  x=v_0,v_1,\ldots,v_t=w
\]
be a shortest path from $x$ to a non-neighbor $w$ of $x$.  Then $t\geq2$.
Because $\deg_H(x)\geq2$, the edge $xv_1$ can have an endpoint of degree one
only if $\deg_H(v_1)=1$.  This contradicts the further adjacency of $v_1$ to
$v_2$.  Thus every vertex other than $x$ is adjacent to $x$.  Applying the
hypothesis to each edge incident with $x$, and again using
$\deg_H(x)\geq2$, shows that all other vertices have degree one.  Hence
$H=K_{1,r}$ for some $r\geq2$.
\end{proof}

\begin{theorem}[{\cite{Rall1990}}]
\label{thm:edge-target-two-characterization}
A graph $H$ is in $\DomCrit_2$ if and only if it is a nonempty disjoint
union of nontrivial stars.
\end{theorem}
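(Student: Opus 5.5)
The plan is to prove the two directions separately, using the two preceding lemmas for the forward direction and Proposition~\ref{prop:one-two-domatic} together with the degree bound for the converse.

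\emph{Forward direction.} Let $H\in\DomCrit_2$. Since $\dom(H)=2$, the graph $H$ is nonempty, and by Proposition~\ref{prop:one-two-domatic} it has no isolated vertex. Hence every connected component of $H$ contains at least one edge. Lemma~\ref{lem:edge-critical-edge-leaf} says that every edge of $H$ has an endpoint of degree one. This property is inherited by each component, since degrees are computed inside the component. Lemma~\ref{lem:edge-star-components} then shows that each component is a nontrivial star. So $H$ is a nonempty disjoint union of nontrivial stars.

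\emph{Converse.} Let $H=\dot\bigcup_{i=1}^{t}K_{1,r_i}$ with $t\geq1$ and all $r_i\geq1$. There are no isolated vertices, so $\dom(H)\geq2$ by Proposition~\ref{prop:one-two-domatic}. Every leaf has degree one, so $\delta(H)=1$, and Proposition~\ref{prop:degree-bound} gives $\dom(H)\leq2$. Therefore $\dom(H)=2$.

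Now fix an edge $e$. In a star, every edge joins the center to a leaf, so deleting $e$ isolates that leaf. (In $K_{1,1}$ both endpoints become isolated.) Proposition~\ref{prop:one-two-domatic} then yields $\dom(H-e)=1$. Hence $H\in\DomCrit_2$.

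No step is a real obstacle. The only points needing care are the nonemptiness bookkeeping and checking that the hypothesis of Lemma~\ref{lem:edge-star-components} transfers to each component.
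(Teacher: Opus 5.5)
Your proposal is correct and follows the paper's proof essentially step for step. Both directions match: the forward direction uses Proposition~\ref{prop:one-two-domatic} followed by Lemmas~\ref{lem:edge-critical-edge-leaf} and~\ref{lem:edge-star-components}, and the converse uses the isolated-vertex criterion together with the minimum-degree bound and the observation that deleting any edge isolates a leaf.
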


\begin{proof}
Suppose first that $H\in\DomCrit_2$.  Since $\dom(H)=2$,
Proposition~\ref{prop:one-two-domatic} implies that $H$ has no isolated
vertex.  Thus every connected component contains an edge.  By
Lemmas~\ref{lem:edge-critical-edge-leaf} and~\ref{lem:edge-star-components}, every component is a nontrivial star.

Conversely, suppose that $H$ is a nonempty disjoint union of nontrivial
stars.  It has no isolated vertex, so
$\dom(H)\geq2$ by Proposition~\ref{prop:one-two-domatic}.  Every component
has a leaf, and hence $\delta(H)=1$.  The minimum-degree bound gives
$\dom(H)\leq2$, so $\dom(H)=2$.

Let $e$ be any edge of $H$.  Deleting $e$ isolates at least one endpoint in
the star component containing $e$; if that component is $K_2$, both endpoints
become isolated.  Therefore $H-e$ has an isolated vertex, and
Proposition~\ref{prop:one-two-domatic} gives $\dom(H-e)=1$.  Thus
$H\in\DomCrit_2$.
\end{proof}

\begin{corollary}
\label{cor:edge-target-two-p}
The problem $\DomCrit_2$ is in $\Pclass$.
\end{corollary}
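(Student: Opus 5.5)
By Theorem~\ref{thm:edge-target-two-characterization}, membership in $\DomCrit_2$ is equivalent to a purely structural property: $H$ is nonempty and every connected component of $H$ is a nontrivial star $K_{1,r}$ with $r\geq1$. It therefore suffices to describe a polynomial-time (indeed linear-time) procedure that checks this property, and no further reasoning about domatic numbers is required.

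The algorithm proceeds as follows. First reject if $V(H)=\varnothing$. Next compute all vertex degrees and the connected components by breadth-first search, in time $O(|V(H)|+|E(H)|)$. For each component $C$, accept the component exactly when $|V(C)|\geq2$ and one of the following holds:
\begin{itemize}
\item $|V(C)|=2$ and $C$ is a single edge;
\item there is a vertex of degree $|V(C)|-1$ in $C$ and every other vertex of $C$ has degree $1$.
\end{itemize}
The graph is accepted exactly when every component is accepted. Each test is a single pass over the degrees within the component.

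For correctness, the component tests recognize exactly the graphs $K_{1,r}$ with $r\geq1$. Conversely, a connected graph with a center of degree $|V(C)|-1$ and all other vertices of degree one has exactly $|V(C)|-1$ edges, all incident with the center, and is therefore a star. Combining this with the characterization in Theorem~\ref{thm:edge-target-two-characterization} shows that the algorithm accepts precisely the graphs in $\DomCrit_2$.

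There is no real obstacle in this argument. The only point requiring care is the boundary behaviour: the empty graph and graphs with isolated vertices must be rejected, since $K_1$ is not a nontrivial star and has domatic number one. The explicit tests for nonemptiness and for $|V(C)|\geq2$ handle both cases.
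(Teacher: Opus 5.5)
Your proposal is correct and follows the paper's route: both reduce membership to the structural characterization of Theorem~\ref{thm:edge-target-two-characterization} and then check that structure in linear time. The only difference is in the check itself. The paper skips computing components: it rejects empty graphs and graphs with an isolated vertex, then tests that every edge has an endpoint of degree one, implicitly relying on Lemma~\ref{lem:edge-star-components}. You instead recognize each component as a star directly. Your first bullet is already covered by the second, but this is harmless.
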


\begin{proof}
Reject an empty input graph or a graph with an isolated vertex.  For every
edge $xy$, test whether $\deg(x)=1$ or $\deg(y)=1$.  The tests all succeed
exactly for the nonempty disjoint unions of nontrivial stars, by
Theorem~\ref{thm:edge-target-two-characterization}.  With adjacency lists,
the procedure runs in linear time.
\end{proof}

\subsection{The fixed-target upper bound}
\label{subsec:edge-fixed-membership}

For fixed $k$, membership separates into two threshold conditions: the input
graph must be $k$-domatic, and no edge-deleted graph may remain $k$-domatic.
The edge-deletion bound then identifies their intersection with
$\DomCrit_k$.

\begin{proposition}
\label{prop:edge-fixed-membership}
For every fixed integer $k\geq1$, the problem $\DomCrit_k$ is in $\DP$.
\end{proposition}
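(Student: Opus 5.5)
We plan to write $\DomCrit_k$ as $A\cap B$ with $A\in\NP$ and $B\in\coNP$. Put
\[
  A=\{G:\dom(G)\geq k\}=k\text{-}\mathrm{DNP},
  \qquad
  B=\{G:\dom(G-e)\leq k-1\text{ for every }e\in E(G)\}.
\]
The set $A$ is in $\NP$ for every fixed $k$: a certificate is a function $c:V(G)\to[k]$, and checking $c(N_G[v])=[k]$ for all $v$ takes polynomial time. We adopt the convention that $G$ must be nonempty, so the empty graph is excluded, matching $\dom(\varnothing)=0$.

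The complement of $B$ is the set of graphs having some edge $e$ with $\dom(G-e)\geq k$. This set is in $\NP$. The certificate consists of an edge $e$ together with a $k$-domatic coloring of $G-e$, and it is verified in polynomial time. Hence $B\in\coNP$.

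It remains to show $A\cap B=\DomCrit_k$. If $G\in\DomCrit_k$, then $\dom(G)=k$ and $\dom(G-e)=k-1$ for every edge $e$, so $G\in A\cap B$. Conversely, suppose $G\in A\cap B$, so $\dom(G)\geq k$ and $\dom(G-e)\leq k-1$ for all edges $e$. Two cases arise.

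\emph{$G$ has an edge $e$.} Proposition~\ref{prop:edge-deletion-bound} gives $\dom(G)\leq\dom(G-e)+1\leq k$. Hence $\dom(G)=k$, and the second equivalent form in Definition~\ref{def:edge-criticality} applies.

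\emph{$G$ is edgeless.} Then $G$ is nonempty and has an isolated vertex, so $\dom(G)=1$ by Proposition~\ref{prop:one-two-domatic}. The requirement $\dom(G)\geq k$ then forces $k=1$, and $G\in\DomCrit_1$. When $k\geq2$, the set $A$ already contains no edgeless graphs, so this case does not occur.

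The only delicate point is this edgeless case, where $B$ holds vacuously and the upper bound $\dom(G)\leq k$ must come from elsewhere. It is handled by the isolated-vertex characterization in Proposition~\ref{prop:one-two-domatic}. Everything else is routine, so $\DomCrit_k\in\NP\wedge\coNP=\DP$.
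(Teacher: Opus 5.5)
Your proof is correct and follows essentially the same route as the paper. Both use the decomposition $\DomCrit_k=A_k\cap B_k$ with $A_k=k\text{-}\mathrm{DNP}\in\NP$ and $B_k\in\coNP$, and both invoke Proposition~\ref{prop:edge-deletion-bound} to force $\dom(G)=k$ once $G$ has an edge. The only difference is organizational: the paper handles $k=1$ separately by observing that $\DomCrit_1$ is the polynomial-time class of nonempty edgeless graphs, and then uses $\dom(H)\geq k\geq 2$ to guarantee an edge, whereas you keep $k=1$ inside the uniform decomposition and split instead on whether $G$ is edgeless.
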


\begin{proof}
For $k=1$, the problem consists exactly of the nonempty edgeless graphs.  Indeed,
a nonempty edgeless graph has domatic number one and satisfies the universal
edge-deletion condition vacuously.  Conversely, if $H$ has an edge, then every
edge-deleted graph is nonempty and therefore has domatic number at least one,
so it cannot have domatic number zero.  Thus $\DomCrit_1\in\Pclass\subseteq
\DP$.

Fix $k\geq2$, and let
\[
  A_k
  =
  \{H\mid \dom(H)\geq k\}
  =k\text{-}\mathrm{DNP}.
\]
By the definition of the domatic-number threshold problem, $A_k\in\NP$;
a certificate is a $k$-domatic coloring of $H$.

Let
\[
  B_k
  =
  \left\{
    H\ \middle|\
    \dom(H-e)\leq k-1
    \text{ for every }e\in E(H)
  \right\}.
\]
The complement of $B_k$ is in $\NP$: one guesses an edge $e\in E(H)$
and a $k$-domatic coloring of $H-e$.  Hence $B_k\in\coNP$.

We have
\[
  \DomCrit_k=A_k\cap B_k.
\]
The forward inclusion follows immediately from the equivalent upper-bound
formulation in Definition~\ref{def:edge-criticality}.  Conversely, let
$H\in A_k\cap B_k$.  Since $\dom(H)\geq k\geq2$, the graph $H$ has an
edge.  For any $e\in E(H)$, Proposition~\ref{prop:edge-deletion-bound} gives
\[
  k\leq\dom(H)\leq\dom(H-e)+1\leq k.
\]
Hence $\dom(H)=k$, and the condition $H\in B_k$ is precisely the upper-bound
criticality condition from Definition~\ref{def:edge-criticality}.  Thus
$H\in\DomCrit_k$.  Therefore
\[
  \DomCrit_k=A_k\cap B_k\in\NP\wedge\coNP=\DP.
\qedhere
\]
\end{proof}

\subsection{\texorpdfstring{{\boldmath $\DP$}}{DP}-completeness for \texorpdfstring{{\boldmath $k=3$}}{k=3}}
\label{subsec:edge-target-three}

\subsubsection{Idea of the reduction from edge-minimal \texorpdfstring{{\boldmath $3$}}{3}-uncolorability}
\label{subsec:edge-construction-idea}

To prove DP-hardness for $k=3$, we use the edge-minimal
$3$-uncolorability problem $\EMinUncol$ from
Definition~\ref{def:minimal-uncolorability}.  An input graph $G$ is a
yes-instance precisely when $G$ is not $3$-colorable but $G-e$ is
$3$-colorable for every edge $e\in E(G)$.  This problem is
DP-complete~\cite[Theorem~8]{BurjonsEtAl2024}.

Every edgeless graph is $3$-colorable and hence is a no-instance.  Replacing an edgeless input by the fixed no-instance $K_2$
therefore shows that $\EMinUncol$ remains DP-complete when restricted to
graphs with at least one edge.  We use this restriction throughout the
present reduction.

We construct a polynomial-time many-one reduction $R$ from
this restriction of $\EMinUncol$
and prove the stronger equivalence
\[
\begin{aligned}
  G\in\EMinUncol
    &\Longleftrightarrow R(G)\in\DomCrit_3,\\
  R(G)\in\DomCrit_3
    &\Longleftrightarrow R(G)\in\DomCrit.
\end{aligned}
\]
Thus every yes-instance is mapped to a domatically critical graph of domatic
number three, whereas every no-instance is mapped to a graph that is not
domatically critical.  The quantifier pattern of $\EMinUncol$ matches domatic
edge criticality particularly closely: the universal condition that every
graph $G-e$ be $3$-colorable is represented by the one-component-per-edge
disjoint union, while the non-$3$-colorability of $G$ is detected when the
switch edge in any component is deleted.

The global form of the reduction is simple.  For every edge $e\in E(G)$, we
construct a graph $C(G,e)$ with a distinguished switch edge $\sigma_e$, and
we take the disjoint union of one such component for every edge of $G$:
\[
  R(G)=\mathop{\dot\bigcup}_{e\in E(G)} C(G,e).
\]
Thus the reduction associates one entire switch component with each edge $e$
of the input graph.  The component $C(G,e)$ is designed to encode the
$3$-colorability of $G-e$, whereas deleting its switch edge makes it encode
the $3$-colorability of $G$.

The construction implements this correspondence by adapting the local encoding
principle underlying the reduction of Kaplan and
Shamir~\cite{KaplanShamir1994}, which translates proper colorings into domatic
colorings.  In a $3$-domatic coloring, a vertex of degree two has a closed
neighborhood of exactly three vertices; those three vertices must consequently
receive the three different colors.  In particular, the two neighbors of the
degree-two vertex have different colors.  Subdividing an edge therefore turns
the corresponding proper-coloring constraint into a local domatic-coloring
constraint.

The construction first enlarges $G$ by placing every vertex $v\in V(G)$ in a
private triangle.  We refer to all vertices of this enlarged graph---the
vertices inherited from $G$ together with the two new vertices in each private
triangle---as \emph{core vertices}.  Fresh copies of these core vertices form
the main part of each component $C(G,e)$; the edges between them are then
replaced by subdivision paths, except that the distinguished edge $e$ is
represented by a switch path.

Criticality requires substantially more than this local encoding.  First, every
core vertex must itself see all three colors.  The private triangles ensure
this: after their edges are subdivided, the two subdivision vertices on the
private-triangle edges incident with a given core vertex supply the two colors
different from the color of that core vertex.  Second, for the distinguished
edge $e$, the switch gadget must allow the endpoints of $e$ to receive the same
color while $\sigma_e$ is present, but force them to receive different colors
after $\sigma_e$ is deleted.  Thus $C(G,e)$ represents all proper-coloring
constraints except possibly the one corresponding to $e$, whereas
$C(G,e)-\sigma_e$ represents all of them.

The two states of a switch component are summarized as follows:
\[
\begin{array}{c|c|c}
  \text{component} & \text{constraint represented by }e
    & \text{$3$-domatic exactly when}\\
  \hline
  C(G,e) & \text{disabled} & G-e\text{ is $3$-colorable}\\
  C(G,e)-\sigma_e & \text{restored} & G\text{ is $3$-colorable}.
\end{array}
\]
Finally, the disjoint union over all edges converts this local switch behavior
into edge criticality.  Since the domatic number of a disjoint union is the
minimum of the component domatic numbers, $R(G)$ is $3$-domatic exactly when
every component $C(G,e)$ is $3$-domatic.  Moreover, an edge deletion that lowers
the domatic number of any one component lowers the domatic number of the entire
union.  This is why the reduction uses one component for every edge of the
input graph.  The private triangles and the switch gadget are the new
ingredients needed for this criticality reduction; only the underlying
degree-two forcing principle is shared with the Kaplan--Shamir construction.

\subsubsection{Formal construction}
\label{subsec:edge-formal-construction}

Let $G=(V,E)$ be an instance of $\EMinUncol$.  We first saturate its vertices
by private triangles.

\begin{definition}[Triangle saturation]
\label{def:edge-triangle-saturation}
For every $v\in V$, introduce two fresh vertices $v^1$ and $v^2$ and add the
three edges
\[
  vv^1,\qquad v^1v^2,\qquad v^2v.
\]
All original edges of $G$ are retained.  The resulting graph is denoted by
$G^\triangle$; thus
\[
  V(G^\triangle)
  =V\cup\{v^1,v^2:v\in V\}
\]
and
\[
  E(G^\triangle)
  =E\cup
  \bigcup_{v\in V}
  \{vv^1,v^1v^2,v^2v\}.
\]
The triangle induced by $\{v,v^1,v^2\}$ is the \emph{private triangle} of
$v$.  We call all vertices of $G^\triangle$ \emph{core vertices}; the vertices
in $V$ are the \emph{original vertices}, and the vertices $v^1,v^2$ are the
\emph{preprocessing vertices}.
\end{definition}

\begin{lemma}
\label{lem:edge-saturation-colorability}
For every original edge $e\in E(G)$,
\[
  G\text{ is $3$-colorable}
  \quad\Longleftrightarrow\quad
  G^\triangle\text{ is $3$-colorable},
\]
and
\[
  G-e\text{ is $3$-colorable}
  \quad\Longleftrightarrow\quad
  G^\triangle-e\text{ is $3$-colorable}.
\]
\end{lemma}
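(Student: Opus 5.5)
Both equivalences follow from one observation. Let $H$ be any graph with $V(H)=V$, and let $H^\triangle$ be obtained from $H$ by attaching to every $v\in V$ the private triangle on $\{v,v^1,v^2\}$. Then $H$ is $3$-colorable if and only if $H^\triangle$ is $3$-colorable. Taking $H=G$ gives $H^\triangle=G^\triangle$. Taking $H=G-e$ gives $H^\triangle=G^\triangle-e$, because $e$ is an original edge and deleting it commutes with adding the private triangles. So I would prove this single observation and then apply it twice.

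For the backward direction, note that $H$ is the subgraph of $H^\triangle$ induced by $V$. Hence every proper $3$-coloring of $H^\triangle$ restricts to a proper $3$-coloring of $H$.

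For the forward direction, let $\gamma:V\to[3]$ be a proper $3$-coloring of $H$. Extend it vertex by vertex: for each $v\in V$, assign to $v^1$ and $v^2$ the two colors of $[3]\setminus\{\gamma(v)\}$, in either order. Then I check the three kinds of edges.
\begin{itemize}
\item Each original edge keeps its properly colored endpoints, since $\gamma$ is unchanged on $V$.
\item Each triangle edge $vv^1$, $vv^2$ or $v^1v^2$ joins two vertices with distinct colors by construction.
\item No other edges exist, because the preprocessing vertices $v^1,v^2$ are adjacent only to the other two vertices of their own private triangle.
\end{itemize}
So the extension is a proper $3$-coloring of $H^\triangle$.

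There is no real obstacle here. The only point needing care is the identity $(G-e)^\triangle=G^\triangle-e$, and it holds because $e\in E$ is not a triangle edge.
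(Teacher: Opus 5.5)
Your proof is correct and follows essentially the same route as the paper's: extend a proper coloring by giving $v^1,v^2$ the two colors other than that of $v$, and restrict to $V$ for the converse. The only difference is packaging: you state a single observation for an arbitrary $H$ on $V$ and use $(G-e)^\triangle=G^\triangle-e$, where the paper simply remarks that the same extension applies to $G-e$ because $e$ is an original edge.
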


\begin{proof}
Every proper $3$-coloring of $G$ extends independently over each private
triangle: after coloring $v$, assign the two remaining colors to $v^1$ and
$v^2$.  The same extension applies to $G-e$, because $e$ is an original edge.
Conversely, restricting a proper coloring of $G^\triangle$ or
$G^\triangle-e$ to $V$ gives a proper coloring of $G$ or $G-e$,
respectively.
\end{proof}

Fix an original edge $e=ab\in E(G)$.  We now define the switch component
$C(G,e)$.  Every occurrence of a vertex in a different switch component is a
fresh copy.

\begin{definition}[Switch component]
\label{def:edge-switch-component}
Take a fresh copy of the core vertex set $V(G^\triangle)$.  For every edge
$f=xy\in E(G^\triangle)\setminus\{e\}$, introduce one ordinary subdivision
vertex $u_f$.  Introduce six further vertices
\[
  p_e,\quad s_e,\quad q_e,\quad z_e,\quad z_e^1,\quad z_e^2.
\]
The vertices $p_e$ and $q_e$ are the \emph{switch-path vertices}, $s_e$ is
the \emph{switch vertex}, and $z_e,z_e^1,z_e^2$ form the
\emph{switch-support triangle}.

Formally,
\[
\begin{split}
  V(C(G,e))={}&V(G^\triangle)
  \cup\{u_f:f\in E(G^\triangle)\setminus\{e\}\}\\
  &\cup\{p_e,s_e,q_e,z_e,z_e^1,z_e^2\}.
\end{split}
\]
The edge set consists of the following four classes.
\begin{enumerate}[label=(\roman*),leftmargin=2.4em]
  \item For every $f=xy\in E(G^\triangle)\setminus\{e\}$, add the two
  edges $\{x,u_f\}$ and $\{u_f,y\}$.  The edge $f$ itself is not retained.

  \item Replace the distinguished edge $e=ab$ by the switch path
  \[
    a-p_e-s_e-q_e-b.
  \]

  \item Add the distinguished switch edge
  \[
    \sigma_e=\{s_e,z_e\}.
  \]

  \item Add the switch-support triangle edges
  \[
    \{z_e,z_e^1\},\qquad \{z_e^1,z_e^2\},\qquad \{z_e^2,z_e\}.
  \]
\end{enumerate}
No other edges are present.
\end{definition}

The degrees in the component will be used repeatedly.  We record them
explicitly.

\begin{lemma}[Vertex degrees]
\label{lem:edge-component-degrees}
In $C(G,e)$, every ordinary subdivision vertex, every switch-path vertex,
and each of $z_e^1,z_e^2$ has degree two.  The vertices $s_e$ and $z_e$ have
degree three.  Every preprocessing vertex has degree two, and every original
vertex $v$ has degree $\deg_G(v)+2$.  Consequently,
\[
  \delta(C(G,e))=2
  \qquad\text{and}\qquad
  2\leq\dom(C(G,e))\leq3.
\]
\end{lemma}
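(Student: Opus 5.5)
The degrees are read off directly from the four edge classes of Definition~\ref{def:edge-switch-component}; the lower bound on $\dom$ comes from Proposition~\ref{prop:one-two-domatic} and the upper bound from Proposition~\ref{prop:degree-bound}. First, every ordinary subdivision vertex $u_f$ is incident only with the two edges $\{x,u_f\}$ and $\{u_f,y\}$ from class~(i), so it has degree two. Likewise $p_e$ lies only on the path edges $ap_e$ and $p_es_e$, and $q_e$ only on $s_eq_e$ and $q_eb$, so both have degree two. The vertex $s_e$ has the two path edges together with $\sigma_e$, giving degree three. The vertex $z_e$ has $\sigma_e$ and its two triangle edges, giving degree three, while $z_e^1$ and $z_e^2$ have only their two triangle edges.

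For the core vertices, the point to verify is that each edge $f=xy$ of $G^\triangle$ is replaced by exactly one edge at $x$ and one at $y$. This holds whether $f\neq e$ (via $u_f$) or $f=e$ (via $p_e$ at $a$ and $q_e$ at $b$). Hence every core vertex has the same degree in $C(G,e)$ as in $G^\triangle$. In $G^\triangle$, a preprocessing vertex $v^i$ lies only on its private triangle, so it has degree two. An original vertex $v$ has its $\deg_G(v)$ original edges plus the edges $vv^1$ and $vv^2$, so it has degree $\deg_G(v)+2$. No vertex receives any other edge, since the definition states that no other edges are present.

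It follows that every vertex has degree at least two, and degree-two vertices exist (for example $z_e^1$), so $\delta(C(G,e))=2$. Proposition~\ref{prop:degree-bound} then gives $\dom(C(G,e))\le 3$. Since there are no isolated vertices, Proposition~\ref{prop:one-two-domatic} gives $\dom(C(G,e))\ge 2$. The only step that needs care is the observation that the switch path, like an ordinary subdivision, contributes exactly one incident edge at each of $a$ and $b$; the rest is routine bookkeeping.
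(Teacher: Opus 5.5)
Your proof is correct and follows essentially the same route as the paper: you read off the degrees of the gadget vertices from the edge classes, observe that ordinary subdivision and the switch path both leave the degrees of core vertices unchanged from $G^\triangle$, and then apply Proposition~\ref{prop:degree-bound} and Proposition~\ref{prop:one-two-domatic}. The point you flag as needing care, that the switch path contributes exactly one incident edge at each of $a$ and $b$, is the same remark the paper makes.
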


\begin{proof}
An ordinary subdivision vertex $u_f$ is adjacent exactly to the two endpoints
of $f$.  The switch-path vertices satisfy
\[
  N(p_e)=\{a,s_e\},
  \qquad
  N(q_e)=\{s_e,b\}.
\]
Moreover,
\[
  N(s_e)=\{p_e,q_e,z_e\},
  \qquad
  N(z_e)=\{s_e,z_e^1,z_e^2\},
\]
and each of $z_e^1,z_e^2$ has its two neighbors in the support triangle.

Each preprocessing vertex is incident with exactly two private-triangle edges
in $G^\triangle$, both of which become subdivided paths.  An original vertex
$v$ is incident in $G^\triangle$ with its $\deg_G(v)$ original edges and its
two private-triangle edges.  Replacing one incident edge by the switch path
when $v\in\{a,b\}$ does not change this degree count.  Hence
$\deg_{C(G,e)}(v)=\deg_G(v)+2$.

Thus the minimum degree is two.  The upper bound follows from
Proposition~\ref{prop:degree-bound}; the lower bound follows from the absence
of isolated vertices and Proposition~\ref{prop:one-two-domatic}.
\end{proof}

\subsubsection{Local coloring analysis}
\label{subsec:edge-local-analysis}

Throughout this subsection the domatic colors are $\{1,2,3\}$.  For two
distinct colors $\alpha$ and $\beta$, write
$\operatorname{third}(\alpha,\beta)$ for the unique third color.

The first correspondence describes the switch while its distinguished edge is
present.  The switch can then absorb either relation between the colors of
$a$ and $b$, so the proper-coloring constraint represented by $e$ is absent.

\begin{lemma}[Switch present]
\label{lem:edge-switch-present}
For every graph $G$ and every original edge $e\in E(G)$,
\[
  \dom(C(G,e))=3
  \quad\Longleftrightarrow\quad
  G-e\text{ is $3$-colorable}.
\]
\end{lemma}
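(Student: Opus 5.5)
We prove the two implications separately, using the vertex degrees from Lemma~\ref{lem:edge-component-degrees} together with degree-two forcing (Lemma~\ref{lem:degree-two-forcing}).

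For the forward direction, let $c$ be a $3$-domatic coloring of $C(G,e)$. Define $\gamma$ on the core vertices of $G^\triangle$ as the restriction of $c$. Take any edge $f=xy\in E(G^\triangle)\setminus\{e\}$. Its subdivision vertex $u_f$ has degree two with neighbors $x$ and $y$, so Lemma~\ref{lem:degree-two-forcing} gives $c(x)\neq c(y)$. Thus $\gamma$ is a proper $3$-coloring of $G^\triangle-e$. Lemma~\ref{lem:edge-saturation-colorability} then shows that $G-e$ is $3$-colorable. This direction needs no analysis of the switch at all.

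For the reverse direction, start from a proper $3$-coloring of $G-e$ and extend it over the private triangles as in Lemma~\ref{lem:edge-saturation-colorability}. This gives a proper coloring $\gamma$ of $G^\triangle-e$, with $\gamma(a)$ and $\gamma(b)$ possibly equal. Color every core vertex by $\gamma$. Give each ordinary subdivision vertex $u_f$, with $f=xy$, the color $\operatorname{third}(\gamma(x),\gamma(y))$; this is well defined because $\gamma(x)\neq\gamma(y)$.

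We then check closed neighborhoods. Each $u_f$ sees three distinct colors by construction. Each preprocessing vertex $v^i$ has neighbors $u_{vv^i}$ and $u_{v^1v^2}$. Since $v,v^1,v^2$ carry three distinct colors, the color of $u_{vv^i}$ is $\gamma(v^{3-i})$ and the color of $u_{v^1v^2}$ is $\gamma(v)$, so $v^i$ sees all three colors. An original vertex $v$ has colors $\gamma(v^1)$ and $\gamma(v^2)$ on $u_{vv^1}$ and $u_{vv^2}$ respectively, so together with its own color it already sees all three colors. Hence the switch path attached to $a$ or $b$ imposes nothing on $a$ or $b$ themselves. This is exactly what the private triangles are for.

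The main step is to color the switch gadget $p_e,s_e,q_e,z_e,z_e^1,z_e^2$ in both cases, subject to the following constraints. The set $\{a,p_e,s_e\}$ must be rainbow, and so must $\{s_e,q_e,b\}$, because $p_e$ and $q_e$ have degree two. The vertex $s_e$ must see all colors on $\{p_e,q_e,z_e,s_e\}$. The set $\{z_e,z_e^1,z_e^2\}$ must be rainbow, which also satisfies $z_e^1$, $z_e^2$ and $z_e$.

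Suppose first that $\gamma(a)=\gamma(b)=1$. Set $s_e=2$ and $p_e=q_e=3$. Then $s_e$ sees $\{2,3\}$ from itself and the path, and needs color $1$ from $z_e$. So set $z_e=1$ and give $z_e^1,z_e^2$ the colors $2,3$.

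Suppose instead that $\gamma(a)=1$ and $\gamma(b)=2$. Set $s_e=3$, $p_e=2$ and $q_e=1$. Then $s_e$ already sees all three colors, so the support triangle can be any rainbow coloring.

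In both cases every vertex sees all three colors, so $\dom(C(G,e))\geq 3$. The upper bound $\dom(C(G,e))\leq 3$ from Lemma~\ref{lem:edge-component-degrees} then gives equality. The only delicate point is the equal-color case, where $s_e$ must get its missing color through the switch edge $\sigma_e$. This is the mechanism that deleting $\sigma_e$ will later destroy.
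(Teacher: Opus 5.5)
Your proof is correct and follows the paper's argument essentially step for step. The direction from a $3$-domatic coloring to a proper coloring uses degree-two forcing on the ordinary subdivision vertices. The constructive direction uses the private-triangle extension, third-color subdivision vertices, and the same two-case coloring of the switch gadget, with $z_e$ supplying the missing color to $s_e$ when $\gamma(a)=\gamma(b)$.

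One trivial slip: $u_{vv^1}$ receives $\gamma(v^2)$ and $u_{vv^2}$ receives $\gamma(v^1)$, so your ``respectively'' is swapped. This does not affect the conclusion.
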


\begin{proof}
Suppose first that $G-e$ has a proper $3$-coloring $c$.  By
Lemma~\ref{lem:edge-saturation-colorability}, extend it to a proper
$3$-coloring of $G^\triangle-e$ in which every private triangle uses all
three colors.  For every ordinary edge
$f=xy\in E(G^\triangle)\setminus\{e\}$, set
\[
  c(u_f)=\operatorname{third}(c(x),c(y)).
\]

It remains to color the switch gadget.  If $c(a)\neq c(b)$, set
\[
  c(s_e)=\operatorname{third}(c(a),c(b)),
  \qquad
  c(p_e)=c(b),
  \qquad
  c(q_e)=c(a),
\]
and choose $c(z_e)$ arbitrarily.  If $c(a)=c(b)=\alpha$, choose the other two
colors $\beta$ and $\gamma$ and set
\[
  c(s_e)=\beta,
  \qquad
  c(p_e)=c(q_e)=\gamma,
  \qquad
  c(z_e)=\alpha.
\]
In either case, color $z_e^1$ and $z_e^2$ with the two colors different
from $c(z_e)$.

We verify every vertex type.
\begin{enumerate}[label=(\roman*),leftmargin=2.4em]
  \item For an ordinary subdivision vertex $u_f$, where $f=xy$, the closed
  neighborhood $\{x,u_f,y\}$ contains all three colors by construction.

  \item The closed neighborhoods
  \[
    N[p_e]=\{a,p_e,s_e\},
    \qquad
    N[q_e]=\{s_e,q_e,b\}
  \]
  contain all three colors in both switch cases.

  \item We have
  \[
    N[s_e]=\{s_e,p_e,q_e,z_e\}.
  \]
  If $c(a)\neq c(b)$, the vertices $s_e,p_e,q_e$ already use all three
  colors.  If $c(a)=c(b)$, the vertex $z_e$ supplies the unique color
  missing from $s_e,p_e,q_e$.

  \item The support vertices $z_e,z_e^1,z_e^2$ use the three distinct
  colors.  Thus $z_e^1$ and $z_e^2$ see all colors in their closed
  neighborhoods, and $z_e$ already sees all colors within the support
  triangle, independently of its additional neighbor $s_e$.

  \item Let $x$ be any core vertex.  In its private triangle, let $y$ and
  $w$ be the other two vertices.  Since $e$ is an original edge, neither
  private-triangle edge $xy$ nor $xw$ is distinguished.  The corresponding
  subdivision vertices satisfy
  \[
    c(u_{xy})=c(w),
    \qquad
    c(u_{xw})=c(y).
  \]
  Therefore $x$ sees its own color and the two other colors on these two
  subdivision vertices.  This verifies every original and every
  preprocessing core vertex.
\end{enumerate}
Hence $c$ is a $3$-domatic coloring of $C(G,e)$.  By
Lemma~\ref{lem:edge-component-degrees}, its domatic number is exactly three.

Conversely, suppose that $C(G,e)$ has a $3$-domatic coloring $c$.  For every
$f=xy\in E(G^\triangle)\setminus\{e\}$, the ordinary subdivision vertex
$u_f$ has degree two.  Lemma~\ref{lem:degree-two-forcing} gives
$c(x)\neq c(y)$.  The restriction of $c$ to $V(G^\triangle)$ is therefore a
proper $3$-coloring of $G^\triangle-e$.  Restricting further to $V(G)$ gives
a proper $3$-coloring of $G-e$.
\end{proof}

Deleting the switch edge separates the support triangle from the switch
vertex.  The switch vertex then has degree two, and degree-two forcing restores
the missing constraint corresponding to $e$.

\begin{lemma}[Switch deleted]
\label{lem:edge-switch-deleted}
For every graph $G$ and every original edge $e\in E(G)$,
\[
  \dom(C(G,e)-\sigma_e)=3
  \quad\Longleftrightarrow\quad
  G\text{ is $3$-colorable}.
\]
\end{lemma}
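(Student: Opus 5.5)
We prove the two directions separately, following the pattern of the proof of Lemma~\ref{lem:edge-switch-present}. In $C(G,e)-\sigma_e$ the neighborhoods are $N(s_e)=\{p_e,q_e\}$ and $N(z_e)=\{z_e^1,z_e^2\}$. All other neighborhoods are as in Lemma~\ref{lem:edge-component-degrees}. The support triangle becomes a separate component isomorphic to $K_3$, and $s_e$ now has degree two.

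For the forward direction, suppose $c$ is a $3$-domatic coloring of $C(G,e)-\sigma_e$. Each ordinary subdivision vertex $u_f$ with $f=xy\in E(G^\triangle)\setminus\{e\}$ has degree two. By Lemma~\ref{lem:degree-two-forcing} its two neighbors receive different colors, so $c(x)\neq c(y)$. For the distinguished edge we apply degree-two forcing three times along the switch path $a-p_e-s_e-q_e-b$:
\begin{itemize}
\item at $s_e$, the colors $c(p_e),c(s_e),c(q_e)$ are distinct;
\item at $p_e$, the colors $c(a),c(p_e),c(s_e)$ are distinct;
\item at $q_e$, the colors $c(s_e),c(q_e),c(b)$ are distinct.
\end{itemize}
With only three colors, the first two give $c(a)=c(q_e)$, and the first and third give $c(b)=c(p_e)$. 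Since $c(p_e)\neq c(q_e)$, we get $c(a)\neq c(b)$. The restriction of $c$ to $V(G^\triangle)$ is therefore a proper $3$-coloring of $G^\triangle$. Restricting further to $V(G)$, or invoking Lemma~\ref{lem:edge-saturation-colorability}, shows that $G$ is $3$-colorable.

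For the reverse direction, take a proper $3$-coloring of $G$ and extend it to $G^\triangle$ by Lemma~\ref{lem:edge-saturation-colorability}, so that every private triangle uses all three colors. Set $c(u_f)=\operatorname{third}(c(x),c(y))$ for each ordinary edge $f=xy$. Because $c(a)\neq c(b)$, we can use the first switch case from Lemma~\ref{lem:edge-switch-present}:
\[
c(s_e)=\operatorname{third}(c(a),c(b)),\qquad c(p_e)=c(b),\qquad c(q_e)=c(a).
\]
Finally, color $z_e,z_e^1,z_e^2$ with three distinct colors. The verification is the one in Lemma~\ref{lem:edge-switch-present}, with two changes:
\begin{itemize}
\item $N[s_e]=\{p_e,s_e,q_e\}$ uses all three colors by the choice above;
\item $z_e$ sees all three colors within its own triangle, so losing the neighbor $s_e$ does no harm.
\end{itemize}
Core vertices are handled by their two private-triangle subdivision vertices, exactly as before. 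This gives a $3$-domatic coloring. The minimum degree of $C(G,e)-\sigma_e$ is still two, so Proposition~\ref{prop:degree-bound} shows the domatic number is exactly three.

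The step that needs the most care is the forward direction's chain of forcings along the switch path, where the key fact is that only three colors are available. The remaining work is the bookkeeping that deleting $\sigma_e$ changes only the closed neighborhoods of $s_e$ and $z_e$.
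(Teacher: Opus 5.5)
Your proposal is correct and matches the paper's proof essentially step for step. The constructive direction uses the same switch coloring $c(s_e)=\operatorname{third}(c(a),c(b))$, $c(p_e)=c(b)$, $c(q_e)=c(a)$ with a rainbow support triangle, and the converse uses the same triple degree-two forcing along $a-p_e-s_e-q_e-b$. Your explicit appeal to the minimum-degree bound to get $\dom=3$ exactly, rather than merely $3$-domatic, is a small point the paper leaves implicit here and only records in the following lemma.
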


\begin{proof}
Suppose first that $G$ has a proper $3$-coloring $c$.  Extend it to a proper
$3$-coloring of $G^\triangle$, and color every ordinary subdivision vertex
with the third color of its endpoints.  Since $e=ab$ is properly colored,
$c(a)\neq c(b)$.  Set
\[
  c(s_e)=\operatorname{third}(c(a),c(b)),
  \qquad
  c(p_e)=c(b),
  \qquad
  c(q_e)=c(a).
\]
Color the support triangle with all three colors in any order.

Again we verify all vertex types.  Every ordinary subdivision vertex sees the
three colors on itself and its two endpoints.  The two switch-path vertices
see all three colors in $\{a,p_e,s_e\}$ and $\{s_e,q_e,b\}$, respectively.
After deletion of $\sigma_e$,
\[
  N[s_e]=\{p_e,s_e,q_e\},
\]
which contains all three colors.  Each support-triangle vertex sees all three
colors within that triangle.  Finally, the private-triangle argument from the
proof of Lemma~\ref{lem:edge-switch-present} shows that every original and
every preprocessing core vertex sees all three colors.  Thus
$C(G,e)-\sigma_e$ is $3$-domatic.

Conversely, suppose that $C(G,e)-\sigma_e$ has a $3$-domatic coloring $c$.
Every ordinary subdivision vertex forces the endpoints of its corresponding
edge in $G^\triangle-e$ to have different colors.  The switch vertex now has
degree two and
\[
  N[s_e]=\{p_e,s_e,q_e\},
\]
so $p_e,s_e,q_e$ have the three distinct colors.  Since
$N[p_e]=\{a,p_e,s_e\}$ also contains all three colors, $a$ has the color of
$q_e$.  Similarly, $b$ has the color of $p_e$.  Therefore
$c(a)\neq c(b)$.  The restriction to $V(G^\triangle)$ is a proper
$3$-coloring of all of $G^\triangle$, including the edge $e$, and its
restriction to $V(G)$ properly $3$-colors $G$.
\end{proof}

It remains to analyze the edges other than the distinguished switch edge.
Each such edge is deliberately incident with a vertex of degree two.  Deleting
it creates a leaf but never an isolated vertex.

\begin{lemma}[Non-switch edges]
\label{lem:edge-nonswitch-deletion}
For every $f\in E(C(G,e))\setminus\{\sigma_e\}$,
\[
  \dom(C(G,e)-f)=2.
\]
This holds independently of the colorability of $G-e$.
\end{lemma}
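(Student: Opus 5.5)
Every edge $f\neq\sigma_e$ of $C(G,e)$ is incident with a vertex of degree two. The plan is therefore to sandwich $\dom(C(G,e)-f)$ between two and two using only degree and isolation arguments. No coloring analysis of $G$ is needed, which is why the statement holds regardless of the colorability of $G-e$.

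\emph{Upper bound.} I will go through the four edge classes of Definition~\ref{def:edge-switch-component} and exhibit, for each edge other than $\sigma_e$, an endpoint of degree two. Class~(i) edges $\{x,u_f\}$ and $\{u_f,y\}$ have the ordinary subdivision vertex $u_f$ as such an endpoint. The switch-path edges $ap_e$, $p_es_e$, $s_eq_e$, $q_eb$ contain $p_e$ or $q_e$. Each support-triangle edge contains $z_e^1$ or $z_e^2$. All of these vertices have degree two by Lemma~\ref{lem:edge-component-degrees}. Deleting $f$ lowers the degree of that endpoint to one, so $\delta(C(G,e)-f)\le 1$. Proposition~\ref{prop:degree-bound} then gives $\dom(C(G,e)-f)\le 2$.

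\emph{Lower bound.} By Proposition~\ref{prop:one-two-domatic}, it suffices to show that $C(G,e)-f$ has no isolated vertex. Deleting one edge lowers only the degrees of its two endpoints, and each by exactly one. So an isolated vertex could arise only if some endpoint of $f$ had degree one in $C(G,e)$. Lemma~\ref{lem:edge-component-degrees} gives $\delta(C(G,e))=2$, so every vertex ends with degree at least one. Hence $\dom(C(G,e)-f)\ge 2$.

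The only point needing care is the case check in the upper bound: confirming that no edge other than $\sigma_e$ joins two vertices of degree at least three. The vertices of degree at least three are $s_e$, $z_e$, and possibly original vertices. The only edge between $s_e$ and $z_e$ is $\sigma_e$. Original vertices are never adjacent to each other or to $s_e$ or $z_e$, because every original edge has been subdivided and $e$ has been replaced by the switch path. With this check in place, the lemma follows immediately.
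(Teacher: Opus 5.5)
Your proof is correct and follows essentially the same route as the paper: you exhibit a degree-two endpoint for every non-switch edge to get $\dom(C(G,e)-f)\le 2$ from the minimum-degree bound, and you rule out isolated vertices to get $\dom(C(G,e)-f)\ge 2$ from Proposition~\ref{prop:one-two-domatic}. Your lower bound uses $\delta(C(G,e))=2$ directly, which is a slightly more economical version of the paper's vertex-type check; your closing paragraph on vertices of degree at least three is redundant, since the class-by-class inspection already settles the upper bound, but it is harmless.
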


\begin{proof}
There are exactly three non-switch edge types.
\begin{enumerate}[label=(\roman*),leftmargin=2.4em]
  \item Each edge of an ordinary subdivided path $x-u_h-y$ is incident with
  the degree-two vertex $u_h$.  This includes the paths replacing edges of $G$
  other than $e$ and all paths replacing private-triangle edges.

  \item Each edge of the switch path $a-p_e-s_e-q_e-b$ is incident with one
  of the degree-two vertices $p_e,q_e$.

  \item Each edge of the switch-support triangle is incident with at least
  one of the degree-two vertices $z_e^1,z_e^2$.
\end{enumerate}
Deleting $f$ lowers such a degree-two endpoint to degree one.  The
minimum-degree bound therefore gives $\dom(C(G,e)-f)\leq2$.

No vertex becomes isolated.  The affected degree-two endpoint retains its
other incident edge.  Every core vertex had degree at least two before the
deletion, owing to its private triangle, and therefore retains degree at least
one.  The switch vertex and $z_e$ had degree three.  Hence the deleted graph
has no isolated vertex, and Proposition~\ref{prop:one-two-domatic} gives the
reverse inequality.  Thus its domatic number is exactly two.
\end{proof}

\begin{lemma}[The distinguished switch edge]
\label{lem:edge-switch-edge-deletion}
The graph $C(G,e)-\sigma_e$ has no isolated vertex and minimum degree two.
Consequently,
\[
  \dom(C(G,e)-\sigma_e)
  =
  \begin{cases}
    3,&\text{if $G$ is $3$-colorable},\\
    2,&\text{if $G$ is not $3$-colorable}.
  \end{cases}
\]
\end{lemma}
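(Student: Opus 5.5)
The argument is a degree count in $C(G,e)-\sigma_e$, after which \Cref{prop:degree-bound}, \Cref{prop:one-two-domatic} and \Cref{lem:edge-switch-deleted} settle the two cases.

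\textbf{Degree count.} Deleting $\sigma_e=\{s_e,z_e\}$ changes only the degrees of $s_e$ and $z_e$. By \Cref{lem:edge-component-degrees}, both drop from three to two. Every other vertex keeps its degree from \Cref{lem:edge-component-degrees}:
\begin{itemize}[leftmargin=2.4em]
  \item ordinary subdivision vertices, switch-path vertices, $z_e^1$, $z_e^2$ and preprocessing vertices still have degree two;
  \item each original vertex $v$ still has degree $\deg_G(v)+2\geq2$.
\end{itemize}
Hence every vertex has degree at least two, and degree two is attained (for example by $s_e$). So $C(G,e)-\sigma_e$ has minimum degree exactly two and no isolated vertex.

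\textbf{Bounds on the domatic number.} \Cref{prop:degree-bound} gives $\dom(C(G,e)-\sigma_e)\leq3$. \Cref{prop:one-two-domatic} gives $\dom(C(G,e)-\sigma_e)\geq2$. The value is therefore $2$ or $3$.

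\textbf{Splitting by $3$-colorability.} By \Cref{lem:edge-switch-deleted}, the value is $3$ exactly when $G$ is $3$-colorable. If $G$ is not $3$-colorable, the value is not $3$, and the bounds above force it to be $2$.

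There is no real obstacle here. The only point requiring care is checking that the deletion does not lower any other vertex below degree two. This holds because the two endpoints of $\sigma_e$ are the only vertices whose neighborhoods change.
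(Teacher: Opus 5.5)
Your proposal is correct and follows the paper's proof essentially step for step. The paper likewise notes that deleting $\sigma_e$ lowers only the degrees of $s_e$ and $z_e$ from three to two, concludes that the domatic number is two or three, and lets Lemma~\ref{lem:edge-switch-deleted} decide which; you just spell out the degree bookkeeping and the appeals to Propositions~\ref{prop:degree-bound} and~\ref{prop:one-two-domatic} more explicitly.
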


\begin{proof}
Deleting $\sigma_e$ lowers the degrees of $s_e$ and $z_e$ from three to two
and changes no other degree.  Thus the resulting graph has no isolated vertex
and has domatic number either two or three.  Lemma~\ref{lem:edge-switch-deleted}
determines which value occurs.
\end{proof}

\subsubsection{Correctness and DP-completeness}
\label{subsec:edge-reduction-correctness}

We now formally define the polynomial-time reduction $R$ outlined above by
assembling the switch components.

\begin{definition}[The reduction]
\label{def:edge-reduction-graph}
Let $G=(V,E)$ be a graph with $E\neq\varnothing$.  Define
\[
  R(G)=\mathop{\dot\bigcup}_{e\in E}C(G,e).
\]
All switch components use pairwise disjoint copies of every vertex introduced
in Definition~\ref{def:edge-switch-component}.
\end{definition}

For $n=|V(G)|$ and $m=|E(G)|$, each component $C(G,e)$ has
$O(n+m)$ vertices and edges and is constructible in polynomial time; since
$R(G)$ is the disjoint union of $m$ such components, the reduction $R$ is
computable in polynomial time.

\begin{proposition}[Correctness of the reduction]
\label{prop:edge-reduction-correctness}
For every graph $G$ with $E(G)\neq\varnothing$,
\[
\begin{aligned}
  G\in\EMinUncol
    &\Longleftrightarrow R(G)\in\DomCrit_3,\\
  R(G)\in\DomCrit_3
    &\Longleftrightarrow R(G)\in\DomCrit.
\end{aligned}
\]
Moreover, whenever these equivalent conditions hold,
$\dom(R(G))=3$.
\end{proposition}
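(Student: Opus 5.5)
The plan is to compute $\dom(R(G))$ and all edge-deleted values $\dom(R(G)-f)$ directly from the component lemmas and the disjoint-union formula (Lemma~\ref{lem:disjoint-union}). First, by Lemma~\ref{lem:edge-component-degrees}, each component $C(G,e)$ has domatic number $2$ or $3$. By Lemma~\ref{lem:edge-switch-present} it equals $3$ exactly when $G-e$ is $3$-colorable. Hence
\[
  \dom(R(G))=3 \iff G-e\text{ is $3$-colorable for every }e\in E(G),
\]
and otherwise $\dom(R(G))=2$. Every edge of $R(G)$ lies in exactly one component $C(G,e)$. Deleting it changes only that component, so $\dom(R(G)-f)=\min\bigl(\dom(C(G,e)-f),\min_{e'\neq e}\dom(C(G,e'))\bigr)$.

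For the first equivalence, suppose $G\in\EMinUncol$. Then $\dom(R(G))=3$. For a non-switch edge $f$ of $C(G,e)$, Lemma~\ref{lem:edge-nonswitch-deletion} gives $\dom(C(G,e)-f)=2$. For $f=\sigma_e$, Lemma~\ref{lem:edge-switch-edge-deletion} gives value $2$, since $G$ is not $3$-colorable. So every deletion yields $\dom=2$, and $R(G)\in\DomCrit_3$.

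Conversely, suppose $R(G)\in\DomCrit_3$. Then $\dom(R(G))=3$, so every $G-e$ is $3$-colorable. Pick any $e$ (here we use $E\neq\varnothing$). We have $\dom(R(G)-\sigma_e)=2$, while every other component still has value $3$. Hence $\dom(C(G,e)-\sigma_e)=2$, and by Lemma~\ref{lem:edge-switch-edge-deletion} $G$ is not $3$-colorable. Thus $G\in\EMinUncol$.

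For the second equivalence, the forward direction is immediate from the definitions. The backward direction is the main case analysis. Assume $R(G)\in\DomCrit$ and aim to show $\dom(R(G))=3$; the conclusion $R(G)\in\DomCrit_3$ then follows. Suppose instead $\dom(R(G))=2$. Then criticality requires $\dom(R(G)-f)=1$ for every edge $f$, i.e.\ $R(G)-f$ has an isolated vertex (Proposition~\ref{prop:one-two-domatic}). Take $f=\sigma_e$ for any $e$. By Lemma~\ref{lem:edge-switch-edge-deletion}, $C(G,e)-\sigma_e$ has no isolated vertex. The other components are untouched and have minimum degree $2$. So $R(G)-\sigma_e$ has no isolated vertex, hence $\dom(R(G)-\sigma_e)\ge 2=\dom(R(G))$, contradicting criticality. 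Therefore $\dom(R(G))=3$, which also gives the final "moreover" clause.

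The only delicate step is this last one: ruling out that a no-instance becomes critical at the lower value $2$. It is handled by the observation that switch-edge deletion never creates an isolated vertex. One could equally use any non-switch edge via Lemma~\ref{lem:edge-nonswitch-deletion}.
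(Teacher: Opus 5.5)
Your proof is correct and uses the same ingredients as the paper's. The paper argues by contrapositive, splitting $G\notin\EMinUncol$ into two cases. If $G$ is $3$-colorable, deleting $\sigma_e$ keeps the value $3$. If some $G-e$ is not $3$-colorable, then $\dom(R(G))=2$ and deleting a non-switch support edge keeps the value $2$. These are exactly your two backward directions, stated contrapositively. The only cosmetic difference is in the value-$2$ case: you delete $\sigma_e$ and use that no isolated vertex arises, whereas the paper deletes $\{z_e,z_e^1\}$ and invokes Lemma~\ref{lem:edge-nonswitch-deletion}. As you note, either choice works.
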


\begin{proof}
Suppose first that $G\in\EMinUncol$.  Then $G$ is not $3$-colorable and
$G-e$ is $3$-colorable for every $e\in E(G)$.  In particular,
$E(G)\neq\varnothing$.  By Lemma~\ref{lem:edge-switch-present},
\[
  \dom(C(G,e))=3
  \qquad\text{for every }e\in E(G),
\]
and hence the disjoint-union formula gives
\[
  \dom(R(G))=3.
\]

Let $f$ be any edge of $R(G)$, and let $C(G,e)$ be the component containing
$f$.  If $f\neq\sigma_e$, then
Lemma~\ref{lem:edge-nonswitch-deletion} gives
\[
  \dom(C(G,e)-f)=2.
\]
If $f=\sigma_e$, then $G$ is not $3$-colorable, and
Lemma~\ref{lem:edge-switch-edge-deletion} again gives
\[
  \dom(C(G,e)-f)=2.
\]
Every other component remains $3$-domatic.  Therefore
\[
  \dom(R(G)-f)=2.
\]
Thus $R(G)\in\DomCrit_3$, and in particular $R(G)\in\DomCrit$.

Conversely, suppose that $G\notin\EMinUncol$.  We show that $R(G)$ is not
domatically critical.  Since $E(G)\neq\varnothing$, there are two cases.

First suppose that $G$ is $3$-colorable.
Then every graph $G-e$ is $3$-colorable.  Consequently, every component
$C(G,e)$ is $3$-domatic, and
\[
  \dom(R(G))=3.
\]
For every $e\in E(G)$,
Lemma~\ref{lem:edge-switch-edge-deletion} gives
\[
  \dom(C(G,e)-\sigma_e)=3,
\]
because $G$ is $3$-colorable.  Hence
\[
  \dom(R(G)-\sigma_e)=3=\dom(R(G)),
\]
so $R(G)$ is not domatically critical.

It remains that $G$ is not $3$-colorable but that $G-e$ is not
$3$-colorable for some $e\in E(G)$.  By
Lemmas~\ref{lem:edge-component-degrees} and
\ref{lem:edge-switch-present},
\[
  \dom(C(G,e))=2.
\]
Every switch component has domatic number at least two, and therefore
\[
  \dom(R(G))=2.
\]
Choose the non-switch support edge
\[
  f=\{z_e,z_e^1\}
\]
in the component $C(G,e)$.  By
Lemma~\ref{lem:edge-nonswitch-deletion},
\[
  \dom(C(G,e)-f)=2.
\]
All other components retain domatic number at least two, so
\[
  \dom(R(G)-f)=2=\dom(R(G)).
\]
Thus $R(G)$ is again not domatically critical.

These two cases exhaust all graphs with at least one edge outside
$\EMinUncol$, proving both equivalences.
\end{proof}

\begin{remark}[Disconnected and sparse input graphs]
\label{rem:edge-degenerate-sources}
The construction and the preceding proofs allow $G$ to be disconnected and to
contain isolated vertices.  Every vertex of $G$ receives a private triangle, so
its copy in every output component has degree at least two and
sees all three colors by the private-triangle mechanism.
Graphs with one edge or very few edges are
$3$-colorable and fall under the colorable no-instance case in the proof of
Proposition~\ref{prop:edge-reduction-correctness}.  Apart from the standing
assumption that the input has at least one edge, no structural promise beyond
finite simplicity is used.
\end{remark}

\Needspace{9\baselineskip}
\begin{theorem}
\label{thm:edge-target-three}
The problem $\DomCrit_3$ is DP-complete under polynomial-time many-one
reductions.
\end{theorem}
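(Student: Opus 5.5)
The theorem follows by combining the two halves already established, so the plan is short: membership comes from Proposition~\ref{prop:edge-fixed-membership}, and hardness from composing two many-one reductions.

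\emph{Membership.} Proposition~\ref{prop:edge-fixed-membership} with $k=3$ gives $\DomCrit_3=A_3\cap B_3$ with $A_3\in\NP$ and $B_3\in\coNP$. Hence $\DomCrit_3\in\DP$, and nothing further is needed.

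\emph{Hardness.} By Theorem~\ref{thm:source-problems}, $\EMinUncol$ is $\DP$-complete. First I would reduce it to its restriction to graphs with at least one edge. The map $h$ sends an edgeless input to the fixed no-instance $K_2$ and leaves all other graphs unchanged. It preserves membership: an edgeless graph is $3$-colorable and so is not in $\EMinUncol$, while $K_2$ is $3$-colorable and so is also a no-instance. Clearly $h$ is polynomial-time computable.

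Next I apply the reduction $R$ of Definition~\ref{def:edge-reduction-graph} to $h(G)$, which always has an edge. Each component $C(G,e)$ has $O(n+m)$ vertices, and there are $m$ components, so $R$ runs in polynomial time. The first equivalence of Proposition~\ref{prop:edge-reduction-correctness} gives
\[
  h(G)\in\EMinUncol \Longleftrightarrow R(h(G))\in\DomCrit_3 .
\]
Since $G\in\EMinUncol$ if and only if $h(G)\in\EMinUncol$, the composition $R\circ h$ is a polynomial-time many-one reduction from $\EMinUncol$ to $\DomCrit_3$. Because many-one reductions compose and $\DP$ is closed under them, $\DomCrit_3$ is $\DP$-hard. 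Together with membership, this gives $\DP$-completeness.

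There is no real obstacle here, since all the substantive work is in Proposition~\ref{prop:edge-reduction-correctness}. The only point needing care is the edgeless inputs, where $R$ would produce the empty graph. The padding map $h$ handles this cleanly.
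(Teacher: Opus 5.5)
Your proposal is correct and matches the paper's proof: membership from Proposition~\ref{prop:edge-fixed-membership} with $k=3$, and hardness by restricting $\EMinUncol$ to graphs with at least one edge via the same $K_2$-padding the paper uses at the start of the subsection, then applying $R$ and the first equivalence of Proposition~\ref{prop:edge-reduction-correctness}. The appeal to closure of $\DP$ under reductions is superfluous, since transitivity of $\leq_m^p$ alone gives hardness.
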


\begin{proof}
Membership follows from Proposition~\ref{prop:edge-fixed-membership}.  For
hardness, we reduce from the restriction of $\EMinUncol$ to graphs with at
least one edge, which remains DP-complete by the observation at the beginning
of this subsection.  By Proposition~\ref{prop:edge-reduction-correctness}, the
polynomial-time reduction $R$ satisfies
\[
  G\in\EMinUncol
  \quad\Longleftrightarrow\quad
  R(G)\in\DomCrit_3.
\]
Hence $\DomCrit_3$ is DP-hard and therefore DP-complete.
\end{proof}

\subsection{Extension to all fixed \texorpdfstring{{\boldmath $k\geq3$}}{k >= 3}}
\label{subsec:edge-lifting}

The classical clique-addition identity
\[
  \dom(G+K_r)=\dom(G)+r
\]
was established in Proposition~\ref{prop:clique-join}.  Zverovich and
Zverovich proved the corresponding structural theorem that joining a clique
preserves and reflects domatic criticality~\cite[Proposition~4]{ZverovichZverovich1991}.
We invoke this known result directly.

\begin{theorem}[Zverovich--Zverovich]
\label{thm:edge-clique-criticality}
Let $G$ be a graph and let $r\geq0$.  Then
\[
  G\text{ is domatically critical}
  \quad\Longleftrightarrow\quad
  G+K_r\text{ is domatically critical}.
\]
If, moreover, $G$ is domatically critical and $d=\dom(G)$, then every edge
$f$ of $G+K_r$ satisfies
\[
  \dom((G+K_r)-f)=d+r-1.
\]
\end{theorem}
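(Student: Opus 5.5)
The plan is to fix $H=G+K_r$ with clique vertex set $Q$, put $d=\dom(G)$, and use Proposition~\ref{prop:clique-join} to get $\dom(H)=d+r$. The degenerate cases $G=\varnothing$ (where $H=K_r$) and $r=0$ are disposed of directly. The edges of $H$ fall into three classes: edges of $G$, clique edges $qq'$ with $q,q'\in Q$, and join edges $qv$ with $q\in Q$ and $v\in V(G)$.

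\textbf{Edges of $G$.} For $f\in E(G)$ we have $H-f=(G-f)+K_r$, so Proposition~\ref{prop:clique-join} gives $\dom(H-f)=\dom(G-f)+r$. Hence $f$ lowers $\dom(H)$ if and only if it lowers $\dom(G)$. This yields the reverse implication at once: if $H$ is critical, so is $G$. It also settles the forward implication and the value $d+r-1$ for these edges.

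\textbf{Key lemma for clique and join edges.} Assume $G$ is critical. I first establish the following auxiliary fact. Let $c$ be any $d$-domatic coloring of $G$ and let $w\in V(G)$. Then removing $w$ from its color class never leaves that class dominating. Indeed, if it did, $w$ would be dominated by a neighbor $u$ with $c(u)=c(w)$. As in the proof of Proposition~\ref{prop:edge-deletion-bound}, deleting the monochromatic edge $wu$ keeps $c$ domatic, so $\dom(G-wu)=d$, contradicting criticality. If $G$ is edgeless this is immediate, because each vertex is dominated only by itself.

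\textbf{Clique and join edges.} Let $f$ be a clique edge or a join edge, and suppose $H-f$ had a $(d+r)$-domatic coloring $c$. Put $S=c(Q)$ and $T=[d+r]\setminus S$. Every vertex of $G$ other than possibly an endpoint of $f$ remains adjacent to all of $Q$. Repeating the argument of Proposition~\ref{prop:clique-join}, each color of $T$ appears in $N_G[x]$ for every $x\in V(G)$, so $|T|\le d$. Combined with $|S|\le r$, this forces $|S|=r$ and $|T|=d$; in particular the clique vertices carry pairwise distinct colors. The degenerate subcase $T=\varnothing$ gives $d+r\le r$, which is impossible since $d\ge1$.
\begin{itemize}
\item For $f=qq'$, the vertex $q$ no longer sees $c(q')$ on $Q$, so some $w\in V(G)$ has $c(w)=c(q')\in S$.
\item For $f=qv$, the vertex $v$ must see $c(q)$, and only $q$ carries that color on $Q$, so some $w\in N_G[v]$ has $c(w)=c(q)\in S$.
\end{itemize}
In either case, recolor every $S$-colored vertex of $G$ with some color of $T$. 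The $T$-classes already dominated $G$ before this recoloring, so we obtain a $d$-domatic coloring of $G$ in which $w$ can be removed from its class without destroying domination. This contradicts the key lemma. Therefore $\dom(H-f)\le d+r-1$, and Proposition~\ref{prop:edge-deletion-bound} gives equality.

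The main obstacle is precisely this case of clique and join edges. The clique-identity argument alone only yields $\dom(H-f)\le d+r$. What makes it work is the observation that criticality of $G$ rules out redundant vertices in domatic colorings, via monochromatic edges. I expect some care in checking that the endpoint $v$ of a join edge does not break the claim that every $T$-color is seen within $G$; it does not, because $v$ loses only the color of $q$, which lies in $S$.
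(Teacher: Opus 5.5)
Your proof is correct, but it takes a different route from the paper. The paper does not prove the equivalence itself. For $r\geq1$ it cites Proposition~4 of Zverovich and Zverovich, treats $r=0$ as trivial, and then reads off $\dom((G+K_r)-f)=d+r-1$ from the equivalence together with the clique-addition identity $\dom(G+K_r)=d+r$.

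You instead give a self-contained argument, treating all $r\geq0$ uniformly and splitting by edge type:
\begin{itemize}
\item For edges of $G$, the observation $(G+K_r)-f=(G-f)+K_r$ plus the clique-addition identity already gives the reverse implication in full, and the forward implication for these edges. So criticality of $G$ is needed only for clique and join edges.
\item For clique and join edges, your key lemma amounts to this: a $\dom(G)$-domatic coloring of a critical graph has no monochromatic edge. You show that any $(d+r)$-domatic coloring of $(G+K_r)-f$ must give the $r$ clique vertices pairwise distinct colors. The remaining $d$ colors must then dominate $G$ by themselves, and some clique color must appear on a vertex $w$ of $G$. Folding the clique colors into the other classes therefore creates a monochromatic edge at $w$, which contradicts the lemma.
\end{itemize}

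The paper's version buys brevity and proper attribution, at the price of making the $k\geq4$ hardness result depend on an external proof the reader must consult. Yours makes the section self-contained and shows exactly where criticality of $G$ enters.
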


\begin{proof}
For $r\geq1$, the equivalence is the cited result of Zverovich and
Zverovich.  The case $r=0$ is immediate.  Now suppose that $G$ is domatically
critical and put
$d=\dom(G)$.  By the equivalence, $G+K_r$ is domatically critical.  Hence,
for every edge $f$ of $G+K_r$,
\[
  \dom((G+K_r)-f)=\dom(G+K_r)-1.
\]
Proposition~\ref{prop:clique-join} gives $\dom(G+K_r)=d+r$, and therefore
\[
  \dom((G+K_r)-f)=d+r-1.
\]
\end{proof}

\begin{theorem}
\label{thm:edge-all-fixed-targets}
For every fixed integer $k\geq3$, the problem $\DomCrit_k$ is DP-complete.
\end{theorem}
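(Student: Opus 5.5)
Membership in $\DP$ is immediate from Proposition~\ref{prop:edge-fixed-membership}, so the work is hardness. The case $k=3$ is Theorem~\ref{thm:edge-target-three}. For $k\geq4$, I will reduce $\DomCrit_3$ to $\DomCrit_k$ by clique addition: put $r=k-3$ and map an input graph $H$ to $H+K_r$. This map is clearly polynomial-time computable, since it adds $r$ vertices (a constant) together with $O(r|V(H)|+r^2)$ edges.

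Correctness requires showing $H\in\DomCrit_3\Leftrightarrow H+K_r\in\DomCrit_k$. For the forward direction, if $H\in\DomCrit_3$ then $H$ is domatically critical with $\dom(H)=3$. Theorem~\ref{thm:edge-clique-criticality} then shows that $H+K_r$ is domatically critical with every edge deletion giving domatic number $3+r-1=k-1$. Proposition~\ref{prop:clique-join} gives $\dom(H+K_r)=k$. Together these give $H+K_r\in\DomCrit_k$.

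For the converse, suppose $H+K_r\in\DomCrit_k$. Then $\dom(H)=k-r=3$ by Proposition~\ref{prop:clique-join}. This step needs the identity for arbitrary $H$, including empty $H$; in that case $\dom(K_r)=r<k$, so this situation does not arise. Since $H+K_r$ is domatically critical, the reflection direction of Theorem~\ref{thm:edge-clique-criticality} shows that $H$ is domatically critical. Hence $\dom(H-e)=2$ for every edge $e$ by Corollary~\ref{cor:edge-exact}, and so $H\in\DomCrit_3$.

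The only delicate point is this reflection step. It rests entirely on the cited Zverovich--Zverovich equivalence, which is available as Theorem~\ref{thm:edge-clique-criticality}, so there is no real obstacle. One could also avoid the citation by applying the reduction only to graphs $R(G)$: their edges are exactly the edges of $H$ plus the new join and clique edges, and only the forward direction together with the no-instance analysis of Proposition~\ref{prop:edge-reduction-correctness} would then be needed. I will simply use the cited equivalence. Composing with the reduction $R$ gives $\EMinUncol\leq_m^p\DomCrit_k$, which yields $\DP$-completeness for every fixed $k\geq3$.
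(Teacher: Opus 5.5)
Your proposal is correct and follows the paper's proof: clique addition with $r=k-3$, using Proposition~\ref{prop:clique-join} to pin the domatic number at $k$ and Theorem~\ref{thm:edge-clique-criticality} to transfer criticality in both directions. The only difference is that the paper applies this directly to $R(G)+K_r$ rather than first stating $\DomCrit_3\leq_m^p\DomCrit_k$ for arbitrary $H$.

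One small correction to your assessment: the reflection step you call delicate is actually elementary. For $e\in E(H)$ one has $(H+K_r)-e=(H-e)+K_r$, so an edge whose deletion does not lower $\dom(H)$ also does not lower $\dom(H+K_r)$, by Proposition~\ref{prop:clique-join} alone. The genuinely nontrivial content of the cited result is the preservation direction, which handles the new join and clique edges.
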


\begin{proof}
Membership follows from Proposition~\ref{prop:edge-fixed-membership}.  For
hardness, put $r=k-3$ and map a graph $G$ to
\[
  R_k(G)=R(G)+K_r.
\]
Since $r$ is fixed, this reduction is polynomial-time computable.  By
Proposition~\ref{prop:clique-join} and
Theorem~\ref{thm:edge-clique-criticality},
\[
\begin{aligned}
  R_k(G)\in\DomCrit_k
  &\Longleftrightarrow R(G)\text{ is domatically critical}\\
  &\hspace{2.9em}\text{and }\dom(R(G))+r=k\\
  &\Longleftrightarrow R(G)\in\DomCrit_3\\
  &\Longleftrightarrow G\in\EMinUncol.
\end{aligned}
\]
The last equivalence is the reduction proved above.  Hence $\DomCrit_k$ is
DP-hard and therefore DP-complete.
\end{proof}

\subsection{The unrestricted problem}
\label{subsec:edge-unrestricted}

Recall from Definition~\ref{def:edge-criticality} that $\DomCrit$ contains
the graphs $H$ for which every edge deletion lowers $\dom(H)$ by one.

\begin{theorem}
\label{thm:edge-unrestricted-membership}
The unrestricted problem $\DomCrit$ is in $\ThetaTwoP$.
\end{theorem}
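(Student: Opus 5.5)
The plan is to decide $\DomCrit$ with polynomially many nonadaptive $\NP$ queries, using the characterization $\ThetaTwoP=\mathrm{P}^{\mathrm{NP}}_{\parallel}$. Let $H$ be the input, with $n=|V(H)|$ and edge set $E(H)$. If $H$ is edgeless, accept: the universal deletion condition holds vacuously. Otherwise every relevant domatic number lies between $0$ and $n$, by Proposition~\ref{prop:degree-bound}.

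In parallel, ask the $\NP$ oracle the following two families of questions.
\begin{itemize}
\item For each $k\in\{1,\ldots,n\}$: is $\dom(H)\geq k$? This is an instance of $k\text{-}\mathrm{DNP}$.
\item For each $k\in\{1,\ldots,n\}$: is there an edge $e\in E(H)$ with $\dom(H-e)\geq k$?
\end{itemize}
Both families are in $\NP$ uniformly in $k$. The language $\{(H,k):\dom(H)\geq k\}$ is in $\NP$ because a certificate is a $k$-domatic coloring, which has polynomial size since $k\leq n$. For the second family, a certificate is an edge $e$ together with a $k$-domatic coloring of $H-e$. This gives $2n$ queries in total, all nonadaptive.

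From the answers, the base machine computes $d=\dom(H)$ as the largest $k$ with a positive answer in the first family; the answers are monotone in $k$. It then accepts if and only if the query ``some $e$ with $\dom(H-e)\geq d$'' was answered negatively. This is correct because, by Definition~\ref{def:edge-criticality} in its upper-bound form, $H\in\DomCrit$ if and only if $\dom(H-e)\leq d-1$ for every edge $e$. That condition is exactly the negation of the existential query at threshold $d$.

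Two small points need care. First, when $d=1$ and $H$ has an edge, the threshold-$1$ existential query is answered positively, since every $H-e$ is nonempty. So the algorithm rejects, which matches $\DomCrit_1$ consisting only of edgeless graphs. Second, the $k=0$ case never arises. The algorithm is a polynomial-time truth-table reduction to $\NP$, so $\DomCrit\in\mathrm{P}^{\mathrm{NP}}_{\parallel}=\ThetaTwoP$.

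I do not expect a real obstacle. The only subtle point is that $d$ is not known in advance, so the edge query cannot be made adaptively after learning $d$. Querying every threshold simultaneously removes the need for adaptivity. Alternatively, one could use binary search for $d$ with $O(\log n)$ adaptive queries and then one more query, landing in $\mathrm{P}^{\mathrm{NP}[O(\log n)]}$.
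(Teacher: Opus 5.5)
Your proof is correct and follows the paper's argument: determine $d=\dom(H)$ from monotone threshold queries, then accept exactly when the single $\NP$ query ``is there an edge $e$ with $\dom(H-e)\geq d$?'' is answered negatively. The only difference is the model of $\ThetaTwoP$: you ask both query families at every threshold $k\leq n$ in parallel, using $\mathrm{P}^{\mathrm{NP}}_{\parallel}$, whereas the paper finds $d$ by binary search with $O(\log n)$ adaptive queries and then asks the one edge query, which is the alternative you mention at the end.
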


\begin{proof}
This is the standard bounded-query argument.  Let $H$ be an input graph.  If
$H$ is empty, then it can be handled directly, so suppose that $H$ is
nonempty and put
\[
  u=\delta(H)+1.
\]
By Proposition~\ref{prop:degree-bound},
\[
  1\leq \dom(H)\leq u.
\]
Moreover, the predicate $\dom(H)\geq t$ is monotone in $t$ and is an instance
of the domatic-number threshold problem.  We can therefore determine
$d=\dom(H)$ by binary search over the possible values
$t\in\{1,\ldots,u\}$.  This requires $O(\log(u+1))$ queries to an $\NP$
oracle.  Since $u\leq |V(H)|$, the number of queries is
$O(\log(|V(H)|+1))$.

Once $d$ is known, make one further $\NP$ query asking whether there exists
an edge $e\in E(H)$ such that $\dom(H-e)\geq d$.  A yes-certificate consists
of the edge $e$ together with a $d$-domatic coloring of $H-e$.

Accept precisely when this final query is answered no.  By the equivalent
strict-decrease formulation from Definition~2.2, this is exactly the condition
that every edge deletion lowers the domatic number.  The argument also covers
nonempty edgeless graphs, for which the final existential query is false.
Hence \textsc{DomCrit} is decidable with
\(O(\log(|V(H)|+1))\) adaptive queries to an \(\mathrm{NP}\) oracle
and therefore is in \(\Theta_2^p\).
\end{proof}

\begin{theorem}
\label{thm:edge-unrestricted-hardness}
The unrestricted problem $\DomCrit$ is DP-hard under polynomial-time many-one
reductions.
\end{theorem}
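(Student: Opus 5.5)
The plan is to reuse the reduction $R$ from Definition~\ref{def:edge-reduction-graph} unchanged and read off hardness from the second equivalence in Proposition~\ref{prop:edge-reduction-correctness}. Since $\EMinUncol$ is $\DP$-complete by Theorem~\ref{thm:source-problems}, it suffices to give a polynomial-time many-one reduction from $\EMinUncol$ to $\DomCrit$.

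First, I will dispose of the degenerate inputs. An edgeless input $G$ is $3$-colorable and hence a no-instance of $\EMinUncol$. I will map it to a fixed no-instance of $\DomCrit$. Note that $K_2$ does not work, because $\dom(K_2)=2$ and deleting its edge gives $\dom=1$, so $K_2$ is critical. Instead I will take the path $P_3$ plus an extra pendant structure. The simplest fixed choice is the path $P_4=a-b-c-d$. It has no isolated vertex, so $\dom(P_4)=2$. Deleting the middle edge $bc$ leaves $2K_2$, which has domatic number $2$, so $P_4\notin\DomCrit$.

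Alternatively, I can map an edgeless input to $R(G_0)$ for any fixed $3$-colorable graph $G_0$ with an edge, such as $K_2$. By the proposition, this is a no-instance of $\DomCrit$. Either choice is computable in constant time.

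For inputs with at least one edge, I will output $R(G)$, which is polynomial-time computable as noted after Definition~\ref{def:edge-reduction-graph}. Chaining the two equivalences of Proposition~\ref{prop:edge-reduction-correctness} gives
\[
G\in\EMinUncol \iff R(G)\in\DomCrit_3 \iff R(G)\in\DomCrit,
\]
which is exactly correctness.

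The only genuine content lies in the second equivalence, which has already been established. Its proof shows that every no-instance produces either a deletion-stable switch edge (the colorable case) or a deletion-stable support edge in a component of domatic number two. So the main obstacle is already handled, and the remaining work is the bookkeeping of the edgeless case, which I will settle as above.
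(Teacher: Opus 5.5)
Your proposal is correct and matches the paper's proof: reuse $R$ and invoke the second equivalence of Proposition~\ref{prop:edge-reduction-correctness}. The only difference is how edgeless inputs are handled. The paper restricts $\EMinUncol$ to graphs with an edge by sending edgeless inputs to the no-instance $K_2$ of $\EMinUncol$; composed with $R$, this is exactly your alternative $G\mapsto R(K_2)$, and your $P_4$ is an equally valid fixed no-instance of $\DomCrit$.
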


\begin{proof}
Apply $R$ to the restriction of $\EMinUncol$ to graphs with at least one
edge, which remains DP-complete by the observation at the beginning of
Subsection~\ref{subsec:edge-target-three}.  By the second equivalence in
Proposition~\ref{prop:edge-reduction-correctness}, the polynomial-time
reduction $R$ satisfies
\[
  G\in\EMinUncol
  \quad\Longleftrightarrow\quad
  R(G)\in\DomCrit.
\]
Since $\EMinUncol$ is DP-complete, $\DomCrit$ is DP-hard.
\end{proof}

\begin{corollary}
\label{cor:edge-unrestricted-bounds}
The unrestricted problem $\DomCrit$ is DP-hard and is in $\ThetaTwoP$.
\end{corollary}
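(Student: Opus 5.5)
This corollary simply combines the two preceding theorems, so the plan is to cite them rather than build a new argument.

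\emph{Hardness.} Theorem~\ref{thm:edge-unrestricted-hardness} gives the $\DP$-hardness of $\DomCrit$. Its reduction is $R$ from the restriction of $\EMinUncol$ to graphs with at least one edge. That restriction remains $\DP$-complete by Theorem~\ref{thm:source-problems} together with the trivial substitution of $K_2$ for edgeless inputs. The hardness then rests on the second equivalence of Proposition~\ref{prop:edge-reduction-correctness}.

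\emph{Upper bound.} Theorem~\ref{thm:edge-unrestricted-membership} gives membership in $\ThetaTwoP$. Its procedure first finds $d=\dom(H)$ by binary search over $\{1,\ldots,\delta(H)+1\}$, using $O(\log|V(H)|)$ queries to the threshold problem $\{H:\dom(H)\geq t\}\in\NP$. It then makes one further $\NP$ query asking whether some edge $e$ has $\dom(H-e)\geq d$, and accepts exactly when the answer is no. This uses logarithmically many adaptive $\NP$ queries, which places the problem in $\Pclass^{\NP[O(\log n)]}=\ThetaTwoP$.

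Since both ingredients are already established, the proof is a two-line citation and has no real obstacle. The only points to check are bookkeeping. First, the threshold query with a variable $t$, where $t$ is part of the input and not fixed, is still in $\NP$, because a certificate is just a $t$-domatic coloring. Second, the empty graph and edgeless graphs are handled directly, as the membership proof notes.
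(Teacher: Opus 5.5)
Your proposal is correct and follows the paper: the paper gives no separate proof of the corollary and treats it as the conjunction of Theorem~\ref{thm:edge-unrestricted-hardness} ($\DP$-hardness via $R$ and Proposition~\ref{prop:edge-reduction-correctness}) and Theorem~\ref{thm:edge-unrestricted-membership} (binary search for $\dom(H)$ followed by one further $\NP$ query). Your remark that the threshold query remains in $\NP$ when $t$ is part of the input is a reasonable detail that the paper leaves implicit; it holds because $t\leq\delta(H)+1\leq|V(H)|$, so a $t$-domatic coloring is a polynomial-size certificate.
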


\begin{remark}[Open problem]
\label{rem:edge-unrestricted-open}
It remains open whether $\DomCrit$ is $\ThetaTwoP$-hard and, consequently,
whether it is $\ThetaTwoP$-complete.
\end{remark}

%% file: conclusion.tex
\section{Conclusion and open problems}
\label{sec:conclusion}

We have determined the complexity of recognizing domatically critical graphs
when the domatic number is fixed.  At the two lowest target values, the
problem is polynomial-time decidable: $\DomCrit_1$ consists of the nonempty
edgeless graphs, while $\DomCrit_2$ consists precisely of the nonempty
disjoint unions of nontrivial stars.  From target value three onward, the
complexity changes sharply.  For every fixed integer $k\geq3$, the problem
$\DomCrit_k$ is $\DP$-complete under polynomial-time many-one reductions.
Thus, the fixed-target recognition problem admits the complete classification
\[
  \begin{aligned}
    \DomCrit_k &\in \Pclass
      &&\text{for } k\in\{1,2\},\\
    \DomCrit_k &\text{ is }\DP\text{-complete}
      &&\text{for every fixed } k\geq3.
  \end{aligned}
\]

The main technical step is the hardness proof at target value three.  The
switch construction translates edge-minimal $3$-uncolorability into the
requirement that every edge deletion lower the domatic number.  In particular,
the construction simultaneously controls the domatic number of the resulting
graph and the effect of deleting each of its different types of edges.  Once
the target-three case has been established, the clique-join lifting argument
raises the domatic number while preserving and reflecting domatic criticality.
Together, these two mechanisms yield the classification for all fixed target
values.

For the unrestricted recognition problem, we have shown that
\[
  \DomCrit\text{ is }\DP\text{-hard}
  \qquad\text{and}\qquad
  \DomCrit\in\ThetaTwoP.
\]
The exact complexity of this problem remains open.  In particular, it is not
known whether $\DomCrit$ is $\ThetaTwoP$-complete; the missing direction is
$\ThetaTwoP$-hardness.

%% file: references.bib
@article{BurjonsEtAl2024,
  author  = {Burjons, Elisabet and Frei, Fabian and Hemaspaandra, Edith and Komm, Dennis and Wehner, David},
  title   = {Finding Optimal Solutions with Neighborly Help},
  journal = {Algorithmica},
  volume  = {86},
  number  = {6},
  year    = {2024},
  pages   = {1921--1947},
  doi     = {10.1007/s00453-023-01204-1},
  url     = {https://doi.org/10.1007/s00453-023-01204-1}
}

@article{Rall1990,
  author  = {Rall, Douglas F.},
  title   = {Domatically Critical and Domatically Full Graphs},
  journal = {Discrete Mathematics},
  volume  = {86},
  number  = {1--3},
  year    = {1990},
  pages   = {81--87},
  doi     = {10.1016/0012-365X(90)90351-H},
  url     = {https://doi.org/10.1016/0012-365X(90)90351-H}
}

@article{ZverovichZverovich1991,
  author  = {Zverovich, Igor Edmundovich and Zverovich, Vadim E.},
  title   = {A Note on Domatically Critical and Cocritical Graphs},
  journal = {Czechoslovak Mathematical Journal},
  volume  = {41},
  number  = {2},
  year    = {1991},
  pages   = {278--281},
  doi     = {10.21136/CMJ.1991.102460},
  url     = {https://dml.cz/handle/10338.dmlcz/102460}
}

@article{Zelinka1980,
  author  = {Zelinka, Bohdan},
  title   = {Domatically Critical Graphs},
  journal = {Czechoslovak Mathematical Journal},
  volume  = {30},
  number  = {3},
  year    = {1980},
  pages   = {486--489},
  doi     = {10.21136/CMJ.1980.101697},
  url     = {https://dml.cz/handle/10338.dmlcz/101697}
}

@article{CockayneHedetniemi1977,
  author  = {Cockayne, Ernest J. and Hedetniemi, Stephen T.},
  title   = {Towards a Theory of Domination in Graphs},
  journal = {Networks},
  volume  = {7},
  number  = {3},
  year    = {1977},
  pages   = {247--261},
  doi     = {10.1002/net.3230070305},
  url     = {https://doi.org/10.1002/net.3230070305}
}

@incollection{Cockayne1978Survey,
  author    = {Cockayne, Ernest J.},
  title     = {Domination of Undirected Graphs---A Survey},
  editor    = {Alavi, Yousef and Lick, Don R.},
  booktitle = {Theory and Applications of Graphs},
  series    = {Lecture Notes in Mathematics},
  volume    = {642},
  publisher = {Springer},
  address   = {Berlin, Heidelberg},
  year      = {1978},
  pages     = {141--147},
  doi       = {10.1007/BFb0070371},
  url       = {https://doi.org/10.1007/BFb0070371}
}

@article{CaiMeyer1987,
  author  = {Cai, Jin-Yi and Meyer, Gabriele E.},
  title   = {Graph Minimal Uncolorability Is {$D^P$}-Complete},
  journal = {SIAM Journal on Computing},
  volume  = {16},
  number  = {2},
  year    = {1987},
  pages   = {259--277},
  doi     = {10.1137/0216022},
  url     = {https://doi.org/10.1137/0216022}
}

@article{PapadimitriouYannakakis1984,
  author  = {Papadimitriou, Christos H. and Yannakakis, Mihalis},
  title   = {The Complexity of Facets (and Some Facets of Complexity)},
  journal = {Journal of Computer and System Sciences},
  volume  = {28},
  number  = {2},
  year    = {1984},
  pages   = {244--259},
  doi     = {10.1016/0022-0000(84)90068-0},
  url     = {https://doi.org/10.1016/0022-0000(84)90068-0}
}

@article{PapadimitriouWolfe1988,
  author  = {Papadimitriou, Christos H. and Wolfe, David},
  title   = {The Complexity of Facets Resolved},
  journal = {Journal of Computer and System Sciences},
  volume  = {37},
  number  = {1},
  year    = {1988},
  pages   = {2--13},
  doi     = {10.1016/0022-0000(88)90042-6},
  url     = {https://doi.org/10.1016/0022-0000(88)90042-6}
}

@article{CaiEtAl1988BooleanI,
  author  = {Cai, Jin-Yi and Gundermann, Thomas and Hartmanis, Juris and Hemachandra, Lane A. and Sewelson, Vivian and Wagner, Klaus W. and Wechsung, Gerd},
  title   = {The Boolean Hierarchy {I}: Structural Properties},
  journal = {SIAM Journal on Computing},
  volume  = {17},
  number  = {6},
  year    = {1988},
  pages   = {1232--1252},
  doi     = {10.1137/0217078},
  url     = {https://doi.org/10.1137/0217078}
}

@article{CaiEtAl1989BooleanII,
  author  = {Cai, Jin-Yi and Gundermann, Thomas and Hartmanis, Juris and Hemachandra, Lane A. and Sewelson, Vivian and Wagner, Klaus W. and Wechsung, Gerd},
  title   = {The Boolean Hierarchy {II}: Applications},
  journal = {SIAM Journal on Computing},
  volume  = {18},
  number  = {1},
  year    = {1989},
  pages   = {95--111},
  doi     = {10.1137/0218007},
  url     = {https://doi.org/10.1137/0218007}
}

@article{KoblerSchoningWagner1987,
  author  = {K{\"o}bler, Johannes and Sch{\"o}ning, Uwe and Wagner, Klaus W.},
  title   = {The Difference and Truth-Table Hierarchies for {NP}},
  journal = {RAIRO---Theoretical Informatics and Applications},
  volume  = {21},
  number  = {4},
  year    = {1987},
  pages   = {419--435},
  doi     = {10.1051/ita/1987210404191},
  url     = {https://doi.org/10.1051/ita/1987210404191}
}

@article{Wagner1987,
  author  = {Wagner, Klaus W.},
  title   = {More Complicated Questions about Maxima and Minima, and Some Closures of {NP}},
  journal = {Theoretical Computer Science},
  volume  = {51},
  number  = {1--2},
  year    = {1987},
  pages   = {53--80},
  doi     = {10.1016/0304-3975(87)90049-1},
  url     = {https://doi.org/10.1016/0304-3975(87)90049-1}
}

@article{Wagner1990,
  author  = {Wagner, Klaus W.},
  title   = {Bounded Query Classes},
  journal = {SIAM Journal on Computing},
  volume  = {19},
  number  = {5},
  year    = {1990},
  pages   = {833--846},
  doi     = {10.1137/0219058},
  url     = {https://doi.org/10.1137/0219058}
}

@article{RiegeRothe2006,
  author  = {Riege, Tobias and Rothe, J{\"o}rg},
  title   = {Complexity of the Exact Domatic Number Problem and of the Exact Conveyor Flow Shop Problem},
  journal = {Theory of Computing Systems},
  volume  = {39},
  number  = {5},
  year    = {2006},
  pages   = {635--668},
  doi     = {10.1007/s00224-004-1209-8},
  url     = {https://doi.org/10.1007/s00224-004-1209-8}
}

@article{RiegeRothe2006Survey,
  author  = {Riege, Tobias and Rothe, J{\"o}rg},
  title   = {Completeness in the Boolean Hierarchy: Exact-Four-Colorability, Minimal Graph Uncolorability, and Exact Domatic Number Problems---A Survey},
  journal = {Journal of Universal Computer Science},
  volume  = {12},
  number  = {5},
  year    = {2006},
  pages   = {551--578},
  url     = {https://lib.jucs.org/article/28616/}
}

@article{FreiHemaspaandraRothe2022,
  author  = {Frei, Fabian and Hemaspaandra, Edith and Rothe, J{\"o}rg},
  title   = {Complexity of Stability},
  journal = {Journal of Computer and System Sciences},
  volume  = {123},
  year    = {2022},
  pages   = {103--121},
  doi     = {10.1016/j.jcss.2021.07.001},
  url     = {https://doi.org/10.1016/j.jcss.2021.07.001}
}

@book{GareyJohnson1979,
  author    = {Garey, Michael R. and Johnson, David S.},
  title     = {Computers and Intractability: A Guide to the Theory of NP-Completeness},
  publisher = {W. H. Freeman},
  address   = {San Francisco},
  year      = {1979}
}

@misc{Spakowski2026ExactDomatic,
  author        = {Spakowski, Holger},
  title         = {Closing the Complexity Gap for Exact Domatic Number at Three and Four},
howpublished  = {arXiv:2607.09442 [cs.CC]},
year          = {2026},
  eprint        = {2607.09442},
  archivePrefix = {arXiv},
  primaryClass  = {cs.CC},
  doi           = {10.48550/arXiv.2607.09442},
  url           = {https://arxiv.org/abs/2607.09442}
}

@article{KaplanShamir1994,
  author  = {Kaplan, Haim and Shamir, Ron},
  title   = {The Domatic Number Problem on Some Perfect Graph Families},
  journal = {Information Processing Letters},
  volume  = {49},
  number  = {1},
  year    = {1994},
  pages   = {51--56},
  doi     = {10.1016/0020-0190(94)90054-X},
  url     = {https://doi.org/10.1016/0020-0190(94)90054-X}
}

@inproceedings{Hemachandra1987,
  author    = {Lane A. Hemachandra},
  title     = {The Strong Exponential Hierarchy Collapses},
  booktitle = {Proceedings of the Nineteenth Annual ACM Symposium on
               Theory of Computing},
  series    = {STOC '87},
  pages     = {110--122},
  publisher = {ACM},
  year      = {1987},
  doi       = {10.1145/28395.28408},
  url       = {https://doi.org/10.1145/28395.28408}
}

@inproceedings{BussHay1988,
  author    = {Samuel R. Buss and Louise Hay},
  title     = {On Truth-Table Reducibility to {SAT} and the Difference
               Hierarchy over {NP}},
  booktitle = {Proceedings of the Third Annual Structure in Complexity
               Theory Conference},
  pages     = {224--233},
  publisher = {IEEE Computer Society},
  year      = {1988},
  doi       = {10.1109/SCT.1988.5282},
  url       = {https://doi.org/10.1109/SCT.1988.5282}
}
